\def\endthebibliography{%
	\def\@noitemerr{\@latex@warning{Empty `thebibliography' environment}}%
	\endlist
}
\newcolumntype{L}[1]{>{\raggedright\let\newline\\\arraybackslash\hspace{0pt}}m{#1}}
\newcolumntype{C}[1]{>{\centering\let\newline\\\arraybackslash\hspace{0pt}}m{#1}}
\newcolumntype{R}[1]{>{\raggedleft\let\newline\\\arraybackslash\hspace{0pt}}m{#1}}
\newtheorem{theorem}{Theorem}
\newlength{\maxwidth}
\newcommand{\algalign}[2]% #1 = text to left, #2 = text to right
{\makebox[\maxwidth][r]{$#1{}$}${}#2$}
\newcommand{\multiline}[1]{%
	\begin{tabularx}{\dimexpr\linewidth-\ALG@thistlm}[t]{@{}X@{}}
		#1
	\end{tabularx}
}
\theoremstyle{remark}
\newtheorem{remark}{Remark}
\begin{document}
	
	\title{Ampli-Flection for 6G: Active-RIS-Aided\\Aerial Backhaul with Full 3D Coverage}

	\author{Hong-Bae Jeon,~\IEEEmembership{Member,~IEEE}, and Chan-Byoung Chae,~\IEEEmembership{Fellow,~IEEE}
		
		\thanks{This work was supported by Hankuk University of Foreign Studies Research Fund of 2026. This article was presented in part at the IEEE ICTC 2022, Jeju Island, South Korea, October 19-21, 2022~\cite{HBICTC}. \textit{(Corresponding Author: Chan-Byoung Chae.)}}
		\thanks{H.-B. Jeon is with the Department of Information Communications Engineering, Hankuk University of Foreign Studies, Yong-in, 17035, Korea (e-mail: hongbae08@hufs.ac.kr).}
		\thanks{C.-B. Chae is with the School of Integrated Technology, Yonsei University, Seoul 03722, Korea (e-mail: cbchae@yonsei.ac.kr).}% <-this % stops a space
		%\thanks{S.-H. Park is with the Datasolution, Inc., Seoul 06101, Rep. of Korea (e-mail: shpark@datasolution.kr)}
		%\thanks{J. Park is with the Agency of Defense Development, Daejeon 34186, Rep. of Korea (e-mail: jaedon@add.re.kr)}
	}
	
%\markboth{IEEE Wireless Communications Letters,~Vol.~XX, No.~X, March~2020}{H.-B. Jeon \MakeLowercase{\textit{et al.}}: Graph Theory based Resource Allocation and Mode Selection in D2D Communication Systems: The Role of Full-Duplex}

	\maketitle

	\begin{abstract}
	In this paper, we propose a novel aerial backhaul architecture that employs an aerial active reconfigurable intelligent surface (RIS) to achieve energy-efficient, {full 3D coverage including UAV-BSs and ground users in 6G wireless networks}. Unlike prior aerial-RIS approaches limited to {2D coverage with only servicing ground users} or passive operation, the proposed design integrates an active-RIS onto a high-altitude aerial platform, enabling reliable line-of-sight links and overcoming multiplicative fading through amplification. In a scenario with UAV-BSs deployed to handle sudden traffic surges in urban areas, the aerial-active-RIS both reflects and amplifies backhaul signals to overcome blockage. We jointly optimize the aerial platform placement, array partitioning, and RIS phase configuration to maximize UAV-BS energy-efficiency. Simulation results confirm that the proposed method significantly outperforms benchmarks, demonstrating its strong potential to deliver resilient backhaul connectivity with comprehensive 3D coverage in 6G networks.

    	\end{abstract}

	\begin{IEEEkeywords}
		Active reconfigurable intelligent surface, unmanned aerial vehicle, wireless backhaul, non-convex optimization, energy-efficiency.
			\end{IEEEkeywords}

	\IEEEpeerreviewmaketitle
	
	\section{Introduction}
\lettrine{O}{F LATE}, the rapid expansion of data-driven applications with the push for extended wireless coverage in beyond fifth-generation (B5G) and sixth-generation (6G) networks have driven an unprecedented demand for providing ultra-reliable and high-capacity communications anywhere~\cite{smidaFD, bjorn6G, yhFD}. In response, researchers have been exploring the deployment of aerial base stations, realized through unmanned-aerial-vehicle-base-stations (UAV-BSs), for rapid and flexible coverage extensions~\cite{ntn6g, mozatut}. These UAV-BSs present a compelling solution for extending network coverage to areas where conventional base station infrastructure is inadequate, offering rapid deployment capabilities in emergency or high-traffic scenarios. Their high-altitude operation further enhances the likelihood of rich line-of-sight (LoS) links~\cite{a2gglobecom,Noh}, thereby improving communication reliability and efficiency. To enable efficient UAV-BS deployment, prior studies have investigated deployment strategy design~\cite{uavdep1, uavdep2}, trajectory optimization~\cite{hjgen, traj3d}, and power control schemes~\cite{hjcoop, uavmec}. {One of the critical challenges} that remains for deploying UAV-BSs is the backhaul connectivity, particularly in dense urban areas where obstacles obstruct direct backhaul links from terrestrial sources and generates non-LoS (NLoS) components, which severely deteriorates the system’s energy-efficiency, posing a critical challenge for 6G wireless networks operating with higher spectrum~\cite{HBFSO, yhFD22, dsRIS}.

Therefore, the reconfigurable intelligent surfaces (RIS) have been explored as a means to enhance wireless coverage and mitigate blockages~\cite{RIST, risspm, add3,LingRIS}. RIS is an artificial metasurface composed of passive reflecting elements capable of adjusting both the amplitude and phase of an incident signal~\cite{add1, add2}. Furthermore, due to its passive array architecture, the RIS exhibits high channel capacity with low power consumption, which leads to the energy-efficiency compared to conventional decode-and-forward or amplify-and-forward (DF/AF) relays in several scenarios~\cite{DF, vsrelay, Doh}. However, conventional RIS has passive reflection elements, which highly suffer from severe ``multiplicative fading,'' where the path loss is given by the multiplication of Tx-to-RIS and RIS-to-Rx links, limiting their effectiveness in practical scenarios~\cite{aris1, aris4}. {These disadvantages become} more severe when the RIS is deployed in terrestrial infrastructure on dense urban areas.

{To overcome the fundamental physical limitation imposed by the multiplicative fading effect of cascaded channels in RIS-assisted systems, the concept of active-RIS has recently been proposed~\cite{aristut, aris1}. Unlike conventional passive-RISs, which only reflect incident signals without amplification, active-RISs incorporate reflection-type amplifiers within each element, thereby enabling signal amplification at the cost of additional power consumption~\cite{alexRIS}. This capability fundamentally changes the power-rate trade-off and opens up new possibilities for performance enhancement in future 6G wireless networks.}

{Early studies, such as~\cite{aris1}, focused on characterizing the performance gains of active-RIS over passive-RIS through sum-rate maximization and comparative analyses, demonstrating that active-RIS can effectively mitigate the double-fading loss inherent in passive architectures. These works established the potential of active-RIS as a
promising technology for next-generation systems. Building upon this foundation, subsequent research addressed practical implementation challenges. In particular,~\cite{aris2} investigated a sub-connected active-RIS architecture and proposed a joint beamforming design that balances hardware complexity and performance, revealing important trade-offs between achievable gains and circuit-level constraints. From a performance perspective,~\cite{aris4, aris5} further examined the SNR advantages of active-RIS under a unified power budget,
showing that, when properly optimized, active-RIS can significantly outperform passive-RIS counterparts. More recently after performance characterization and architectural investigations, attention has shifted toward robustness and practical deployability. To bridge the gap between theoretical analysis and real-world operation,~\cite{aris7, aris8} studied active-RIS transmission designs under partial channel state information (CSI). Specifically,~\cite{aris7} aimed at maximizing the average sum-rate, while~\cite{aris8} focused on minimizing the average total transmit power subject to rate and outage probability constraints. These works highlight the importance of accounting for CSI uncertainty when designing active-RIS-assisted systems.}

Most prior work, however, has primarily focused on deploying the active-RIS in terrestrial environments, such as those involving buildings and walls. These environments impose several limitations on communication performance. In dense urban areas with many buildings, reliable transmission often depends on multiple reflections, which requires deploying a large number of RIS units to alleviate severe signal attenuation. Moreover, a terrestrial RIS can only reflect signals from the source to destinations located on the same side, thereby restricting the angular range of reflection and preventing isotropic coverage over $0^{\circ}\sim360^{\circ}$ arrival angles. 
{Several studies have investigated the deployment of RIS on aerial platforms (e.g.,~\cite{aerial3d, UPARIS, aerialmm}); however, most of these works focus on conventional passive RIS architectures and consider only 2D coverage, or assume fixed transmit power at the source, thereby largely overlooking energy-efficiency considerations. Although a few recent studies have explored active-RIS-assisted aerial systems~\cite{arisb1, arisb2, arisb3}, they still restrict their scope to 2D coverage scenarios\footnote{{In this paper, we define ``2D coverage'' as the servicing of only ground users, whereas ``3D coverage'' refers to the service of both aerial and ground users, which constitutes a geometrical extension of 2D coverage. It is emphasized that the term ``3D coverage'' in this work refers to geometry-aware coverage over a three-dimensional spatial region, rather than full-dimensional beamforming enabled by planar arrays.}}, which limits their applicability in dense urban 6G environments with heterogeneous mobility and altitude-dependent service demands.}

{In contrast, this paper proposes a novel aerial active-RIS architecture that explicitly distinguishes the geometric benefits of aerial deployment from the amplification-enabled gains introduced by the active-RIS itself. While the elevated platform inherently improves LoS availability and alleviates blockage through favorable 3D geometry, the active-RIS provides an additional and fundamentally different advantage by amplifying the reflected signals at the element level. This amplification capability mitigates the multiplicative path-loss of cascaded source-RIS-destination links, which is an important issue due to the long link distance of aerial platform and cannot be achieved by aerial placement or passive RIS alone under the same power budget~\cite{arisleo}. As a result, the proposed architecture enables energy-efficient 3D coverage and, for the first time, facilitates an aerial-active-RIS-assisted backhaul network capable of reliably supporting UAV-BSs and heterogeneous users in future 6G wireless networks.}

%In this paper, we develop an optimization framework for the efficient deployment of aerial-active-RIS, focusing on placement strategies, array partitioning, and phase control of active-RIS elements. Our approach aims to maximize the energy-efficiency of the backhaul link of UAV-BSs while ensuring the backhaul rate. Through extensive numerical evaluations, we demonstrate that the proposed aerial-active-RIS system significantly outperforms conventional aerial-passive-RIS schemes in terms of energy-efficiency, highlighting its potential in energy-efficient coverage extension and rate-ensured backhaul links for next-generation wireless networks.

The key contributions are summarized as follows:
%\begin{enumerate}
%\item We introduce a novel aerial-active-RIS architecture to enhance UAV-BS backhaul connectivity in urban environments with severe blockages by deploying RIS on the high-altitude platform with active amplifier on the RIS elements, which acquires both rich LoS component and robustness on multiplicative fading, respectively. We first adapt the maximum-ratio transmission (MRT) strategy to maximize the received signal-to-noise ratio (SNR) and thereby
%lower the transmit power.
%\item We verify that equal active-RIS amplification gain among the RIS elements is feasible for optimization on source transmit power, and develop an optimization framework for the placement, array-partition strategy, and phase control of the aerial-active-RIS to maximize energy-efficiency. More specifically, we derive a non-zero lower-bound on the source transmit power for every UAV-BS and minimize the bounds by minimizing and maximizing their numerator and denominator, respectively, which consists of achieving the Pareto-optimum using the method of global criterion.
%\item We numerically estimate the performance of the proposed algorithm in a realistic urban outdoor scenario with randomly distributed ground users and corresponding UAV-BSs.By extensive numerical evaluations, we demonstrate that our proposed approach significantly outperforms passive RIS schemes in terms of energy-efficiency.
\begin{enumerate}
\item We propose a novel aerial-active-RIS architecture aimed at enhancing 3D backhaul connectivity to UAV-BSs in urban environments characterized by severe blockages. The proposed scheme deploys an RIS on a high-altitude platform, where each RIS element is equipped with an active amplifier, ensuring both a rich LoS component and robustness against multiplicative fading, respectively. To maximize the received SNR and minimize the transmit power, we show that the maximum-ratio transmission (MRT) strategy achieves the purpose.
\item We verify that equal amplification gain across active-RIS elements is a feasible approach for optimizing the energy-efficiency by minimizing the total consumed power of the system. Under this assumption, an optimization framework is developed to determine the placement, array-partitioning strategy, phase control and optimal amplification gain of the aerial-active-RIS with the objective of maximizing energy-efficiency. Specifically, a nonzero closed-form value of the source transmit power for each UAV-BS is derived, and a minimization problem is formulated by adjusting the numerator and denominator of the value. {The problem is efficiently solved using the global criterion method, which selects a minimal-total-distance operating point corresponding to a Pareto-efficient solution, and considering the partition of full RIS array, respectively.}
\item The proposed approach is evaluated numerically in a realistic urban outdoor environment with $10^3$ randomly distributed ground users and corresponding UAV-BSs. Extensive numerical results demonstrate that the proposed aerial-active-RIS scheme significantly outperforms conventional benchmarks including {the aerial-AF-relay} and the aerial-passive-RIS schemes and aerial-active-RIS with randomly determined amplification gain in terms of energy-efficiency.
\end{enumerate}

\begin{figure*}[t]
	\begin{center}
		\includegraphics[width=1.98\columnwidth,keepaspectratio]%
		{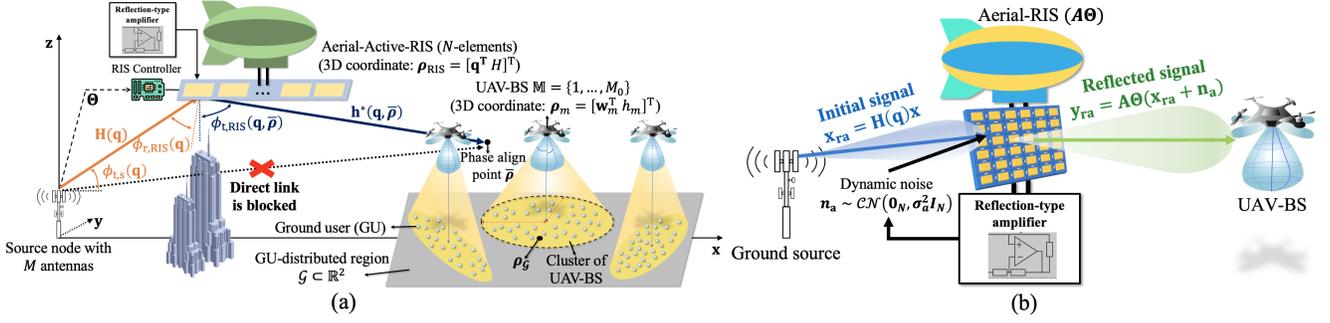}
		\caption{(a) UAV-BS access network supported by an aerial-active-RIS backhaul and (b) signal model illustration of the aerial-active-RIS.}
		\label{fig_overview}
	\end{center}
\end{figure*}	
\section{System Model}\label{4.1}
	\subsection{Aerial Backhaul with Conventional Passive-RIS}\label{4.1.1}
We consider an urban area $\mathcal{G}$ with origin $\pmb{\rho}_{\mathcal{G}}$ on xy-plane, as depicted in Fig.~\ref{fig_overview}(a). We assume that $\mathcal{G}$ contains $N_0$ ground users equipped with omnidirectional antenna, and served by multiple stationary UAV-BSs with a directional antenna, represented as $\mathbb{M}=\left\{1,\cdots, M_0 \right\}$. The directional antenna of each UAV-BS has different azimuth and elevation half-power beamwidths (HPBW)~\cite{beamwidth}. The 3D coordinates of a UAV-BS $m$ are expressed as $\pmb{\rho}_m\triangleq\left[\mathbf{w}_m^{\mathrm{T}}~h_m\right]^{\mathrm{T}}$, with its 2D coordinate $\mathbf{w}_m$ and altitude $h_m$.

To alleviate interference among adjacent cells, we assume that each UAV-BS serves a distinct, non-overlapping subset of users. In this context, the Ellipse Clustering algorithm~\cite{Noh} is employed to allocate ground users and determine $M_0$, while significantly reducing the transmit power through optimized 3D deployment. Under this setup, the throughput $C_m$ of UAV-BS $m$ is expressed as
\begin{equation}
\label{throughput}
C_m=\sum_{n\in\mathbb{U}_m} \frac{B_{\mathrm{f}}}{\left|\mathbb{U}_m\right|} \log_2 \left(1+ \frac{P_{\mathrm{f},n}}{\sigma_{\mathrm{f}}^2}\right),
\end{equation}
where $\mathbb{U}_m$ represents the user set served by UAV-BS $m$, $B_{\mathrm{f}}$ denotes the fronthaul bandwidth, which is evenly divided among the $\left|\mathbb{U}_m\right|$ users in $\mathbb{U}_m$, $P_{\mathrm{f},n}$ is the received power at user $n\in\mathbb{U}_m$, and $\sigma_{\mathrm{f},m}^2$ %=\frac{B_{\mathrm{f}}}{\left|\mathbb{U}_m\right|} N_{\mathrm{psd}}$
represents the noise power of the fronthaul link for UAV-BS $m$. % with the power spectral density (PSD) $N_{\mathrm{psd}}$ of noise.
{By adopting the frequency-division-multiple-access (FDMA) within each cell and non-overlapping UAV-BS coverage regions, both intra- and inter-cell interference are eliminated, and the achievable rate in~\eqref{throughput} directly follows.}

{For an aerial-active-RIS which maintains an altitude $H$, and designating the first element as the reference, the 3D coordinates of the aerial-RIS are expressed as $\pmb{\rho}_{\mathrm{RIS}}\triangleq\left[\mathbf{q}^{\mathrm{T}}~H\right]^{\mathrm{T}}$. We consider a source at origin, which is equipped with a uniform linear array (ULA) comprising $M$ antennas with inter-element spacing $d_{\mathrm{s}}$ and an antenna gain of $G_{\mathrm{s}}$. Note that although the source and the aerial-active-RIS employ ULAs, the considered coverage is inherently ``three-dimensional'', as the RIS and UAV-BS location, propagation distance, and elevation angle are explicitly characterized in a 3D coordinate system~\cite{HBRIS, ula1, ula2}. In particular, $H$ and $\{h_m\}$ induce elevation-dependent path loss and phase variations, which are fully captured in the channel model.}

%Each antenna is characterized by a gain of $G_{\mathrm{s}}$ and is spaced at a distance of $d_{\mathrm{s}}$.
%The distance $d_{\mathcal{G}}\triangleq\left|\left|\pmb{\rho}_{\mathcal{G}}\right|\right|_2$ between the source and the center of the coverage area is assumed to be sufficiently large, which makes the direct transmission path from the source to the UAV-BS obstructed. The aerial-RIS is also assumed to have a ULA structure comprising $N$ reflecting elements, which are spaced at $d_{\mathrm{RIS}}\left(\in\left[\frac{\lambda}{10}, \frac{\lambda}{5}\right]\right)$~\cite{UAVRIS, ularis1, ularis2} with carrier wavelength $\lambda$. Without loss of generality, the RIS is aligned parallel to the x-axis.
The distance between the source and the center of the coverage area is denoted by $d_{\mathcal{G}}\triangleq||\pmb{\rho}_{\mathcal{G}}||_2$, and is assumed to be sufficiently large and the direct transmission path from the source to the UAV-BS is blocked. {The aerial-RIS is modeled as a ULA comprising $N$ reflecting elements with inter-element spacing $d_{\mathrm{RIS}}$, and carrier wavelength $\lambda$. We assumed an uncoupled model for the RIS elements under this ULA-spacing configuration~\cite{UAVRIS, ularis1, ularis2}.} Without loss of generality, the RIS is assumed to be aligned parallel to the x-axis.

Due to the elevated altitude of the aerial-RIS, the backhaul link is assumed to be dominated by {an LoS} component. Considering that $d_{\mathrm{RIS}}$ is significantly smaller than both $H$ and $d_{\mathcal{G}}$, %which predominantly determine the link distance,
the backhaul link is approximated using a uniform plane-wave model, implying that the path loss is assumed to be identical across all RIS-element pairs. Accordingly, the path loss associated with the source-to-RIS link, $\beta_{\mathrm{s}}\left(\mathbf{q}\right)$, and the RIS-to-UAV-BS link, $\beta\left(\mathbf{q},  \pmb{\rho}_m\right)$, are formulated as~\cite{ITU525}:
\begin{equation}
\label{gain}
\beta_{\mathrm{s}}\left(\mathbf{q}\right)={\beta_0}{\left|\left|\pmb{\rho}_{\mathrm{RIS}}\right|\right|_2^{-2}},~\beta\left(\mathbf{q}, \pmb{\rho}_m\right)={\beta_0}{\left|\left|\pmb{\rho}_{\mathrm{RIS}}-\pmb{\rho}_m\right|\right|_2^{-2}},
\end{equation}
where $\beta_0$ represents the reference path loss at a 1~m distance. %, given by $\beta_0=-20\log_{10}f-32.45$~[dB], where $f$ is the operating frequency in GHz. 
Therefore, the channel for the source-to-RIS link, $\mathbf{H}\left(\mathbf{q}\right)\in\mathbb{C}^{N\times M}$, and the RIS-to-destination link, $\mathbf{h}^*\left(\mathbf{q},\cdot\right)\in\mathbb{C}^{1\times N}$, %are characterized by the angle-of-departure (AoD) and angle-of-arrival (AoA), incorporating complete information about the azimuth and elevation angles. Since the links are assumed to be dominated by LoS components, the corresponding channel models
are expressed as:
\begin{equation}
	\begin{split}
	\label{ULA}
	\begin{cases}
	\mathbf{H}\left(\mathbf{q}\right)\\=\sqrt{\beta_{\mathrm{s}} \left(\mathbf{q}\right)}e^{j\left(\Phi_{\mathbf{H}}-\frac{2\pi\left|\left|\pmb{\rho}_{\mathrm{RIS}}\right|\right|_2}{\lambda}\right)} \mathbf{a}_{\mathrm{RIS}} \left(\phi_{\mathrm{r,RIS}} \left(\mathbf{q}\right)\right) \mathbf{a}_{\mathrm{s}}^* \left(\phi_{\mathrm{t,s}} \left(\mathbf{q}\right)\right)\\
	\mathbf{h}^*\left(\mathbf{q},\pmb{\rho}_m\right)\\=\sqrt{\beta \left(\mathbf{q}, \pmb{\rho}_m\right)}e^{j\left(\Phi_{\mathbf{h}}-\frac{2\pi \left|\left|\pmb{\rho}_{\mathrm{RIS}} - \pmb{\rho}_m\right|\right|_2}{\lambda}\right)} \mathbf{a}_{\mathrm{RIS}}^* \left(\phi_{\mathrm{t,RIS}} \left(\mathbf{q}, \pmb{\rho}_m\right)\right),
\end{cases}
\end{split}
\end{equation}
{where $\Phi_{\mathbf{H}}$ and $\Phi_{\mathbf{h}}$ are independent and uniformly-distributed random phases within $\left[0, 2\pi\right)$.} Therein, the array response $\mathbf{a}_{\mathrm{s}}\left(\cdot\right)$ and $ \mathbf{a}_{\mathrm{RIS}}\left(\cdot\right)\in\mathbb{C}^N$ of the source and aerial-RIS, respectively, are given by:
\begin{equation}
\label{ar}
\begin{split}
\begin{cases}
\mathbf{a}_{\mathrm{s}}\left(\cdot\right)=\left[\{e^{-j2\pi\left(m-1\right)\bar{d}_{\mathrm{s}} \left(\sin\left(\cdot\right)\right)}\}_{m=0}^{M-1}\right]^{\mathrm{T}}\\
\mathbf{a}_{\mathrm{RIS}}\left(\cdot\right)=\left[\{e^{-j2\pi\left(n-1\right)\bar{d} \left(\sin\left(\cdot\right)\right)}\}_{n=0}^{N-1}\right]^{\mathrm{T}},
\end{cases}
\end{split}
\end{equation}
where $\bar{d}_{\mathrm{s}} \triangleq \frac{d_{\mathrm{s}}}{\lambda}$ and $\bar{d}\triangleq\frac{d_{\mathrm{RIS}}}{\lambda}$. Lastly, $\phi_{\mathrm{t,s}}\left(\mathbf{q}\right)$, $\phi_{\mathrm{t,RIS}}\left(\mathbf{q}, \cdot\right)$, and $\phi_{\mathrm{r,RIS}} \left(\mathbf{q}\right)$ denote the {angle-of-departure (AoD)} of the source-RIS and RIS-destination links and the angle-of-arrival (AoA) of the source-RIS link, respectively. For analytical tractability, it is assumed that the ground source has knowledge of channels, %all $\left\{\mathbf{h}^* \left(\mathbf{q}, \pmb{\rho}_m\right)\right\}_{m=1}^{M_0}$ and $\mathbf{H}\left(\mathbf{q}\right)$,
which can be obtained using methods outlined in~\cite{CE1, CE3, CE2, CE4}.

\subsection{Implementation of Active-RIS}\label{4.ac}
Since conventional RIS consists of passive elements which cannot amplify the signal, the reflected signal has to suffer from ``multiplicative fading'' induced by the multiplication of the path loss of source-to-RIS and RIS-to-UAV-BS links~\cite{aris1}, which is presented by $\left|\left|\pmb{\rho}_{\mathrm{RIS}}^*-\pmb{\rho}_m \right|\right|_2^2 \left|\left|\pmb{\rho}_{\mathrm{RIS}}^*\right|\right|_2^2 $ in~(\ref{RSNR}). Recently, by accompanying the active-RIS~\cite{aris2, aris3, aris4, aris5}, we can solve the aforementioned ``multiplicative fading'' problem by compensating signal power at the active-RIS component.

We assume that, as illustrated in Fig.~\ref{fig_overview}(b), the active amplification circuit with amplification factor $\{\alpha_n\}_{n=1}^N$ and the element-wise hardware power consumption $P_{\mathrm{E}}$ is implemented into the $N$-element aerial-RIS. In other words, after supplying the hardware power consumption for $N$ active elements, we can generate feasible amplification gain by remaining power for a given power budget~\cite{aris1, aris5}. The reflected and amplified signal $\mathbf{y}_{\mathrm{ra}}$ for incident signal $\mathbf{x}_{\mathrm{ra}}$ is modeled by~\cite{aris1}:
\begin{equation} 
\mathbf{y}_{\mathrm{ra}}=\underbrace{\mathbf{A}\pmb\Theta \mathbf{x}_{\mathrm{ra}}}_{\textrm{Desired signal}} +\underbrace{ \mathbf{A}\pmb{\Theta} {\mathbf{n}}_{\mathrm{a}}}_{\textrm{Dynamic noise}} + \underbrace{\mathbf{n}_{\mathrm{s}}}_{\textrm{Static noise}}.
\label{acSM}
\end{equation}
Here, $\mathbf{A}\triangleq\mathrm{diag}\left(\left\{\alpha_n\right\}_{n=1}^N\right)$ is the amplification matrix of the active-RIS, wherein we assume $1<\alpha_n\le\alpha_{\max}~(\forall n)$~\cite{wongjsac}. Moreover, $\pmb{\Theta}\triangleq\mathrm{diag}\left(\left\{e^{j\theta_n}\right\}_{n=1}^N\right)\in\mathbb{C}^{N\times N}$ is a phase shift matrix with phase shift $\theta_n \in \left[0,2\pi\right)$ of the $n$th element, $\mathbf{n}_{\mathrm{a}} \sim \mathcal{CN} \left(\mathbf{0}_N , \sigma_{\mathrm{a}}^2 \mathbf{I}_N \right)$ is the dynamic noise induced by the amplification of the active-RIS elements, and {$\mathbf{n}_{\mathrm{s}}$ is the static additive Gaussian noise uncorrelated to $\mathbf{A}$ and $\mathbf{n}_{\mathrm{a}}$~\cite{aris1, aris2}.} %Note that $\alpha_n=1~(\forall n)$ and $\sigma_{\mathrm{a}}^2 =0$ leads to conventional passive RIS scenario.

By definition, $\mathbf{x}_{\mathrm{ra}}$ is given by $\mathbf{x}_{\mathrm{ra}}=\mathbf{H}\left(\mathbf{q}\right)\mathbf{x}$, where $\mathbf{x}$ is the total transmit signal. We can represent $\mathbf{x}$ as a sum of unit-magnitude precoding vectors $\mathbf{v}_m \in \mathbb{C}^M$, each corresponding to a unit-power signal $s_m$ intended for UAV-BS $m$, with transmit power $P_m$ at the source. Therefore, $\mathbf{x}$ is:
%for the total transmit signal $\mathbf{x}$, where we can represent $\mathbf{x}$ $\mathbf{v}_m\in\mathbb{C}^M$ by a unit-magnitude precoding vector corresponding to unit-power signal $s_m$ intended to UAV-BS $m$ with source transmit power $P_m$. Therefore, $\mathbf{x}$ is given by
\begin{equation}
\label{tx}
\mathbf{x}=\sum_{m\in \mathbb{M}} \mathbf{v}_m \sqrt{P_m G_{\mathrm{s}} }s_m,
\end{equation}
which implies that the {source transmit} power $P_{\mathrm{tot,s}}$ is:
\begin{equation}
\label{tots}
P_{\mathrm{tot,s}}=\mathcal{E}\left[\left|\left|\pmb{\mathrm{x}}\right|\right|^2\right]=\mathrm{tr}\left(G_{\mathrm{s}} \pmb{\mathrm{PV^*V}}\right)=G_{\mathrm{s}}\sum_{m\in\mathbb{M}} P_m.
\end{equation}
By~(\ref{acSM}), the reflection power of active-RIS is given by~\cite{Doh}
%\begin{equation}
%\begin{aligned}
%\label{refPow}
%P_{\mathrm{R}} &=\mathcal{E}\left[ ||\mathbf{A}\pmb\Theta (\mathbf{x}_{\mathrm{ra}}+\mathbf{n}_{\mathrm{a}})||_2^2  \right]\\
%&=\alpha^2 \left(\mathcal{E}[||\mathbf{x}_{\mathrm{ra}}||_2^2]+N\sigma_a^2\right)\\
%&=\alpha^2 \left(\mathcal{E}[||\mathbf{H}(\mathbf{q})\mathbf{x}||_2^2]+N\sigma_a^2\right).\\
%&=\alpha^2 \left(N\beta_s \mathcal{E}[|\mathbf{a}_{\mathrm{s}}^* \left(\phi_{\mathrm{t,s}} \left(\mathbf{q}\right)\right)\mathbf{x}|^2]+N\sigma_a^2\right).
\begin{equation}
\begin{aligned}
\label{refPow}
P_{\mathrm{R}} &=\mathcal{E}\left[ ||\mathbf{A}\pmb\Theta (\mathbf{x}_{\mathrm{ra}}+\mathbf{n}_{\mathrm{a}})||_2^2  \right]
%&= \mathcal{E}[||\mathbf{A}\mathbf{x}_{\mathrm{ra}}||_2^2]+\sigma_{\mathrm{a}}^2 \sum_{n=1}^N \alpha_n^2 \\
= \mathcal{E}[||\mathbf{A}\mathbf{H}(\mathbf{q})\mathbf{x}||_2^2]+\sigma_{\mathrm{a}}^2 \sum_{n=1}^N \alpha_n^2,
%&=\alpha^2 \left(N\beta_s \mathcal{E}[|\mathbf{a}_{\mathrm{s}}^* \left(\phi_{\mathrm{t,s}} \left(\mathbf{q}\right)\right)\mathbf{x}|^2]+N\sigma_a^2\right).
%&=\alpha^2  G_{\mathrm{s}} \sum_{m\in\mathbb{M}} P_m + \alpha^2 N \sigma_{\mathrm{a}}^2.
\end{aligned}
\end{equation}
and the power consumed by the active hardware components is expressed as $NP_{\mathrm{E}} \triangleq N(P_{\mathrm{DC}}+P_{\mathrm{SW}})$\cite{aris4}, where $P_{\mathrm{E}}$ denotes the per-element hardware power consumption comprising the control and phase-shift switching power $P_{\mathrm{SW}}$ and the direct current (DC) biasing power $P_{\mathrm{DC}}$ required for the amplifier in each active-RIS element~\cite{dsRIS}. Consequently, the total power consumption of the active-RIS is given by
\begin{equation}
\label{totalACT}
P_{\mathrm{tot,a}} = P_{\mathrm{R}} + NP_{\mathrm{E}}, %= \alpha^2 G_{\mathrm{s}} \sum_{m\in\mathbb{M}} P_m + \alpha^2  N\sigma_{\mathrm{a}}^2+ NP_{\mathrm{E}}.
\end{equation}
where we assume that the efficiency of the power amplifier is 1 in both source and aerial-active-RIS. %Hence, the total amount of transmit and reflection power is given by
%\begin{equation}
%\label{total}
%P_{\mathrm{total}}=P_{\mathrm{tot,s}}+P_{\mathrm{tot,a}}.
%\end{equation}
%Equation~(\ref{totalACT}) implies that for given $\alpha^2, G_{\mathrm{s}}, N$, and $P_{\mathrm{E}}$, the maximum power constraint of active RIS $P_{\mathrm{tot,a}}\le P_{\max,\mathrm{a}}$ is equivalent to the upper-bound constraint of $\sum_{m\in\mathbb{M}} P_m$.
%		\begin{figure}[t]
%	\begin{center}
%		\includegraphics[width=0.7\columnwidth,keepaspectratio]%
%		{fig/RISA}
%		\caption{Illustration of Aerial-Active-RIS signal model.}
%		\label{RISac}
%	\end{center}
%\end{figure}

Moreover, $\mathbf{y}_{\mathrm{ra}}$ faces the channel $\mathbf{h}^*\left(\mathbf{q},\pmb{\rho}_m\right)$ from RIS-destination. Hence, by concatenating the effects, the signal model from source-to-RIS-to-UAV-BS-$m$ is given by~\cite{Doh}
\begin{equation}
\label{yHxn}
{y_m= \mathbf{h}^*\left(\mathbf{q},\pmb{\rho}_m\right)\mathbf{A} \pmb{\Theta} \mathbf{H}\left(\mathbf{q}\right)\mathbf{x}+\mathbf{h}^*\left(\mathbf{q},\pmb{\rho}_m\right) \mathbf{A}\pmb{\Theta}\mathbf{n_a}+n,}
\end{equation}
where $n\sim \mathcal{CN}(0, \sigma^2 )$ is the noise at the receiver. By~(\ref{yHxn}), the backhaul rate of UAV-BS $m$ is given by~\cite{Doh}
\begin{equation}
	\label{rate}
	R_m=\frac{B_{\mathrm{b}}}{M_0} \log_2 \left(1+\underbrace{\frac{ P_m G_{\mathrm{s}}\left|\mathbf{h}^*\left(\mathbf{q},\pmb{\rho}_m\right) \mathbf{A}\pmb{\Theta} \mathbf{H}\left(\mathbf{q}\right)\mathbf{v}_m\right|^2}{ \sigma_{\mathrm{a}}^2||\mathbf{A}\pmb{\Theta}^*\mathbf{h}\left(\mathbf{q},\pmb{\rho}_m\right)||^2+\sigma^2}}_{\triangleq\gamma_m}\right),
\end{equation}
where $B_{\mathrm{b}}$ denotes the backhaul bandwidth, which is equally partitioned into $M_0$ sub-bands assigned to each UAV-BS, and $\gamma_m$ is the received SNR of UAV-BS $m$. Here, we can notice the definition of $\gamma_m$ that the signal is amplified by $\{\alpha_n\}$ and the denominator is extended with the power of the dynamic noise signal $\mathbf{h}^*\left(\mathbf{q},\pmb{\rho}_m\right) \mathbf{A}\pmb{\Theta}\mathbf{n_a}$. %Furthermore, the impact of total active noise $\alpha^2 \sigma_{\mathrm{a}}^2 N \beta(\mathbf{q}, \pmb{\rho}_m)$ is proportional to the number of active RIS elements $N$.
Moreover, since $\mathbf{h}^*\left(\mathbf{q},\pmb{\rho}_m\right)$ contains $\beta\left(\mathbf{q}, \pmb{\rho}_m\right)\triangleq\frac{\beta_0}{\left|\left|\pmb{\rho}_{\mathrm{RIS}}-\pmb{\rho}_m\right|\right|_2^2}$, the effect of the dynamic noise gets weaker when the RIS-to-UAV-BS distance gets larger. %If we assume $\alpha_n=\alpha~(\forall n)$, the SNR gain asymptotically converges to $\alpha^2$, which is equivalent to

To enhance energy-efficiency by minimizing the source transmit power, i.e., $\sum_{m\in \mathbb{M}} P_m$\cite{Noh, HBRIS}, which will be clarified in~(\ref{etaee}), we first derive the precoding vector $\mathbf{v}_m$ for UAV-BS $m$ that maximizes $\gamma_m$ under a given $P_m$. This approach enables a reduction in transmit power while maintaining the required data rate, {and is formulated by MRT:}
	\begin{equation}
		\label{vm}
		{\mathbf{v}_m=\frac{\mathbf{a}_{\mathrm{s}}\left(\phi_{\mathrm{t,s}} \left(\mathbf{q}\right)\right)}{\sqrt{M}}~\left(\forall m\in\mathbb{M}\right),}
	\end{equation}
{which implies that the optimal transmission strategy $\{\mathbf{v}_m\}_{m\in\mathbb{M}}$ is given to maximize the inner-product with $\mathbf{a}_{\mathrm{s}}^*\left(\phi_{\mathrm{t,s}} \left(\mathbf{q}\right)\right)$~\cite{HBRIS}.} %which is given by
%\begin{equation}
%\label{vMRT}
%\mathbf{v}_m=\frac{\mathbf{a}_s(\phi_{t,s}(\mathbf{q}))}{||\mathbf{a}_s(\phi_{t,s}(\mathbf{q}))||_2}.
%\end{equation}
Moreover, in~(\ref{rate}) it is clear that
\begin{equation}
\label{Atheta}
||\mathbf{A}\pmb{\Theta}^*\mathbf{h}\left(\mathbf{q},\pmb{\rho}_m\right)||_2^2=||\mathbf{A}\mathbf{h}\left(\mathbf{q},\pmb{\rho}_m\right)||_2^2=\beta\left(\mathbf{q}, \pmb{\rho}_m\right)\sum_{n=1}^N\alpha_n^2,
\end{equation}
{where the first equality comes from the fact that $\pmb{\Theta}\triangleq\mathrm{diag}\left(\left\{e^{j\theta_n}\right\}_{n=1}^N\right)$ is unitary.}
%since we have eliminated the interference by considering frequency division multiple access (FDMA) in~(\ref{rate}).
By applying MRT and~(\ref{Atheta}), $\gamma_m$ becomes~\eqref{RSNR}
%\begin{equation}
%\begin{split}
%		\label{RSNR}
%	\gamma_m=&\frac{\alpha^2 P_m G_{\mathrm{s}}}{\alpha^2 \sigma_{\mathrm{a}}^2 N \beta(\mathbf{q}, \pmb{\rho}_m)+\sigma^2} \left|\mathbf{h}^*\left(\mathbf{q},\pmb{\rho}_m\right) \pmb{\Theta} \mathbf{H}\left(\mathbf{q}\right)\frac{\mathbf{a}_{\mathrm{s}}\left(\phi_{\mathrm{t,s}} \left(\mathbf{q}\right)\right)}{\left|\left|\mathbf{a}_{\mathrm{s}}\left(\phi_{\mathrm{t,s}} \left(\mathbf{q}\right)\right)\right|\right|_2}\right|^2 \\=&\bar{\gamma} \frac{\biggl|\sum_{n=1}^N e^{j\left(\theta_n +2\pi \left(n-1\right) \bar{d}\left(\sin \left(\phi_{\mathrm{t,RIS}}\left(\mathbf{q}, \pmb{\rho}_m\right)\right)-\sin \left(\phi_{\mathrm{r,RIS}}  \left(\mathbf{q}\right)\right)\right)\right)} \biggr|^2}{\left|\left|\pmb{\rho}_{\mathrm{RIS}}\right|\right|_2^2\left|\left|\pmb{\rho}_{\mathrm{RIS}}-\pmb{\rho}_m\right|\right|_2^2},
%\end{split}
%\end{equation}
\begin{figure*}
\begin{equation}
\begin{split}
		\label{RSNR}
	\gamma_m=\bar{\gamma}\left|\mathbf{h}^*\left(\mathbf{q},\pmb{\rho}_m\right) \mathbf{A}\pmb{\Theta} \mathbf{H}\left(\mathbf{q}\right)\frac{\mathbf{a}_{\mathrm{s}}\left(\phi_{\mathrm{t,s}} \left(\mathbf{q}\right)\right)}{\sqrt{M}}\right|^2 =\bar{\gamma} \frac{\biggl|\sum_{n=1}^N \alpha_ne^{j\left(\theta_n +2\pi \left(n-1\right) \bar{d}\left(\sin \left(\phi_{\mathrm{t,RIS}}\left(\mathbf{q}, \pmb{\rho}_m\right)\right)-\sin \left(\phi_{\mathrm{r,RIS}}  \left(\mathbf{q}\right)\right)\right)\right)} \biggr|^2}{\left|\left|\pmb{\rho}_{\mathrm{RIS}}\right|\right|_2^2\left|\left|\pmb{\rho}_{\mathrm{RIS}}-\pmb{\rho}_m\right|\right|_2^2},
\end{split}
\end{equation}
\hrule
\end{figure*}
where $\bar{\gamma} \triangleq\frac{P_m G_{\mathrm{s}}\beta_0^2 M}{ \sigma_{\mathrm{a}}^2  \beta(\mathbf{q}, \pmb{\rho}_m)\sum_{n=1}^N\alpha_n^2+\sigma^2}$~\cite{HBRIS}.
%From (\ref{RSNR}), we can notice that $\gamma_m$ is invariant of the antenna spacing $d_{\mathrm{s}}$.
\begin{figure}[t]
	\centering
	\begin{center}
		\includegraphics[width=0.98\columnwidth,keepaspectratio]%
		{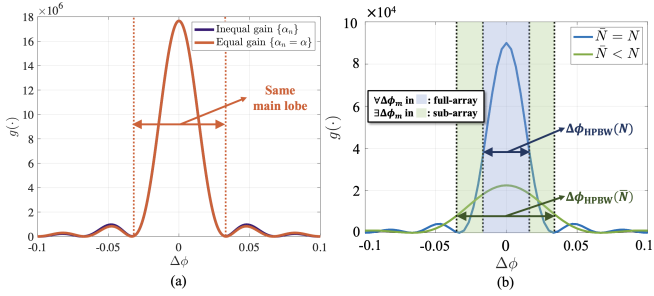}
		\caption{Illustration of passive beamforming gain $g$ and the full/sub-array structure: (a) The main lobe characteristics obtained under a general unequal gain configuration $\{\alpha_n\}_{n=1}^N$ are nearly identical to those under the equal-gain case $\alpha_n=\alpha~(\forall n)$. (b) When the sin-AoD deviation lies beyond the HPBW of the full-array beamforming pattern, a sub-array structure is employed to accommodate the deviated point.}
		\label{figgain}
	\end{center}
\end{figure}

For $\left\{\theta_n^*\right\}_{n=1}^N$ to maximize $\gamma_m$ for a given $P_m$, they must be configured to ensure that the reflected signals are constructively combined at the designated point $\pmb{\rho}_m$:
\begin{equation}
\begin{split}
	\label{phasem}
	&\theta_n^*\left(\mathbf{q},\pmb{\rho}_m\right)\\&=\bar\theta - 2\pi \left(n-1\right) \bar{d} \left(\sin \left(\phi_{\mathrm{t,RIS}}\left(\mathbf{q},\pmb{\rho}_m\right)\right)-\sin\left(\phi_{\mathrm{r,RIS}}  \left(\mathbf{q}\right)\right)\right),
	\end{split}
\end{equation}
where $\bar{\theta}$ is a random phase shift in RIS. However, since $M_0$ UAV-BSs need to be served, the optimal $\left\{\theta_n^*\right\}_{n=1}^N$ vary for each $m\in\mathbb{M}$. Therefore, it is necessary to determine a phase alignment point $\pmb{\bar{\rho}}$ that achieves a Pareto-optimal solution with respect to $\left\{\gamma_m\right\}_{m\in\mathbb{M}}$. That is, for given $\mathbf{q}$ and $\pmb{\bar\rho}$, the phase shifts $\left\{\theta_n^*\right\}_{n=1}^N$ are set as
\begin{equation}
\begin{split}
	\label{phase}
	&\theta_n^*\left(\mathbf{q},\pmb{\bar\rho}\right)\\&=\bar\theta-2\pi \left(n-1\right) \bar{d} \left(\sin \left(\phi_{\mathrm{t,RIS}}\left(\mathbf{q},\pmb{\bar\rho}\right)\right)-\sin \left(\phi_{\mathrm{r,RIS}} \left(\mathbf{q}\right)\right)\right),
\end{split}
\end{equation}
which coherently overlaps the reflected signal to $\pmb{\bar\rho}$. By substituting~(\ref{phase}) into~(\ref{RSNR}), $\gamma_m$ becomes
\begin{equation}
	\label{RSNR2}
	\gamma_m = \bar{\gamma}\frac{\tilde{g}\left(\Delta\phi_m\left(\pmb{\bar\rho}\right)\right)}{ \left|\left|\pmb{\rho}_{\mathrm{RIS}}-\pmb{\rho}_m\right|\right|_2^2 \left|\left|\pmb{\rho}_{\mathrm{RIS}}\right|\right|_2^2},
\end{equation}
%where $\tilde{g}\left(\Delta\phi_m \left(\pmb{\bar\rho}\right)\right)$ is the passive beamforming gain by aerial-active-RIS to $\pmb{\rho}_m$ assuming the phases are aligned to $\pmb{\bar\rho}$, which is obtained by manipulating $\biggl|\sum_{n=1}^N \alpha_ne^{j\left(\theta_n^*\left(\mathbf{q},\pmb{\bar\rho}\right) +2\pi \left(n-1\right) \bar{d}\left(\sin \left(\phi_{\mathrm{t,RIS}}\left(\mathbf{q}, \pmb{\rho}_m\right)\right)-\sin \left(\phi_{\mathrm{r,RIS}} \left(\mathbf{q}\right)\right)\right)\right)}\biggr|^2$ in~(\ref{RSNR}). However, since direct manipulation of $\tilde{g}$ is not feasible, we propose to approximate $\tilde{g}$ by equal gain scenario ($\alpha_n=\alpha~(\forall n)$). This is valid since to maximize $\gamma_m$ for given $P_m$, we have to maximize $\tilde{g}$, which, as illustrated in Fig.~\ref{figgain}(a), is achieved in the main lobe of $\tilde{g}$. However, in Fig.~\ref{figgain}(a), we can check that adopting equal active-RIS gain scenario leads to almost same result compared to $\tilde{g}$ in the range of main lobe, which implies the validity of the approximation. Hence, hereafter, we will assume
where $\tilde{g}\left(\Delta\phi_m \left(\pmb{\bar\rho}\right)\right)$ represents the passive beamforming gain of the aerial-active-RIS towards $\pmb{\rho}_m$, assuming that the phases are aligned with $\pmb{\bar\rho}$. {This gain is derived by evaluating}
\begin{equation}
\begin{aligned}
\label{weightg}
{\tilde{g}\left(\Delta\phi_m \left(\pmb{\bar\rho}\right)\right)=\biggl|\sum_{n=1}^N \alpha_n e^{j\left(2\pi \left(n-1\right) \bar{d}\Delta\phi_m \left(\pmb{\bar\rho}\right)\right)}\biggr|^2}
\end{aligned}
\end{equation}
{in~(\ref{RSNR}), where $\Delta\phi_m \left(\pmb{\bar\rho}\right)$ is the sin-AoD deviation between $\pmb{\bar\rho}$ and $\pmb{\rho}_m$, that is, $\Delta\phi_m \left(\pmb{\bar\rho}\right)\triangleq\sin \left(\phi_{\mathrm{t,RIS}} \left(\mathbf{q}, \pmb{\rho}_m\right)\right)-\sin \left(\phi_{\mathrm{t,RIS}} \left(\mathbf{q}, \pmb{\bar\rho}\right)\right).$ As depicted in Fig.~\ref{figgain}(a), the optimal beamforming gain is achieved within the main lobe of $\tilde{g}$. Moreover, Fig.~\ref{figgain}(a) demonstrates that adopting the equal active-RIS gain scenario yields nearly identical results within the main lobe region. This observation is actually rigorously true, which can be proved by the triangular and Cauchy-Schwartz inequality,
\begin{equation}
\tilde g(\Delta\phi)
\le \left(\sum_{n=1}^N\alpha_n\right)^2
\le N\sum_{n=1}^N\alpha_n^2,
\label{eq:g_upperbound_general}
\end{equation}
respectively, with the first inequality becoming tight when $\Delta\phi$ is sufficiently small (main-lobe region) so that the phasors are nearly aligned. In our deployment, the aerial-active-RIS is placed close to the source and the beam is steered towards $\bar{\boldsymbol{\rho}}$ chosen to represent the Pareto-optimal directions of the UAV-BSs; consequently, $\Delta\phi_m(\bar{\boldsymbol{\rho}})$ stays within the main-lobe vicinity and the variation of the denominator term is minor. In this regime, maximizing $\tilde{g}$ is equivalently achieved by uniform $\alpha_n=\alpha$ by the second inequality, which yields the dominant gain improvement while avoiding an ill-conditioned optimization over element-wise amplitudes that provides marginal additional benefit, as it matches with Fig~\ref{figgain}(a).}

Considering $\alpha_n=\alpha~(\forall n)$ and $\bar{N}\left(\le N\right)$ utilized active-RIS elements, $\tilde{g}$ becomes
\begin{equation}
	\label{beamforming}
	\tilde{g}=\alpha^2 g\left(\Delta\phi_m\left(\pmb{\bar\rho}\right)\right)\triangleq\alpha^2\left|\frac{\sin\left(\pi \bar{N} \bar{d} \Delta \phi_m\left(\pmb{\bar\rho}\right)\right)}{\sin \left(\pi \bar{d} \Delta \phi_m\left(\pmb{\bar\rho}\right)\right)}\right|^2,
\end{equation}
As in Fig.~\ref{figgain}(b), $g$ dissipates to 0 out of its HPBW $\Delta\phi_{\mathrm{HPBW}}$~\cite{HPBW}:
\begin{equation}
%\begin{cases}
\label{hpbwww}
\Delta\phi_{\mathrm{HPBW}}\left(\bar{N}\right)\approx\frac{0.8858}{\bar{N}\bar{d}},~\Delta\phi_{\mathrm{HPBW}} \left(N\right)\triangleq\Delta\phi_{\mathrm{HPBW}},
%\end{cases}
\end{equation}
and the peak gain of $g$ is ${\bar{N}}^2$. Thus, it is necessary to fine-tune  $\pmb{\bar\rho}$ and determine the maximum value of $\bar{N}$ to locate every UAV-BS within the HPBW, thereby maximizing $\left\{g\left(\Delta\phi_m \left(\pmb{\bar\rho}\right)\right)\right\}_{m\in\mathbb{M}}$.

Moreover, $\bar{\gamma}$ is transformed into
\begin{equation}
\label{gammabareq}
\bar{\gamma} \triangleq\frac{P_m G_{\mathrm{s}}\beta_0^2 M}{ \sigma_{\mathrm{a}}^2  \beta(\mathbf{q}, \pmb{\rho}_m)N\alpha^2 +\sigma^2}.
\end{equation}
and we can also manipulate~(\ref{refPow}) by~(\ref{refPowVV}). 
\begin{figure*}
\begin{equation}
\begin{aligned}
\label{refPowVV}
P_{\mathrm{R}} &=\alpha^2 \left(N\beta_{\mathrm{s}} \mathcal{E}\left[\left|\mathbf{a}_s^*(\phi_{t,s}(\mathbf{q}))\sum_{m\in \mathbb{M}} \frac{\mathbf{a}_s(\phi_{t,s}(\mathbf{q}))}{||\mathbf{a}_s(\phi_{t,s}(\mathbf{q}))||_2} \sqrt{P_m G_{\mathrm{s}} }s_m\right|^2\right]+N\sigma_{\mathrm{a}}^2\right)
%&=\alpha^2 \left(N\beta_{\mathrm{s}} \mathcal{E}\left[\left| ||\mathbf{a}_s(\phi_{t,s}(\mathbf{q}))||_2 \sum_{m\in\mathbb{M}}\sqrt{P_m G_{\mathrm{s}} }s_m\right|^2\right]+N\sigma_{\mathrm{a}}^2\right)\\
=\alpha^2 \left(NM\beta_{\mathrm{s}} G_{\mathrm{s}} \sum_{m\in\mathbb{M}} P_m+N\sigma_{\mathrm{a}}^2\right)\\
%&=\alpha^2  G_{\mathrm{s}} \sum_{m\in\mathbb{M}} P_m + \alpha^2 N \sigma_{\mathrm{a}}^2.
\end{aligned}
\end{equation}
\hrule
\end{figure*}
By adding $NP_{\mathrm{E}}$ to~(\ref{refPowVV})~\cite{aris5}, it implies that for given $\alpha^2, G_{\mathrm{s}}, N$, and $P_{\mathrm{E}}$, the maximum power constraint of active-RIS is given by
\begin{equation}
\begin{aligned}
\label{pconstaris}
P_{\mathrm{tot,a}}& = P_{\mathrm{R}} + NP_{\mathrm{E}}\\
& =\alpha^2 \left(NM\beta_{\mathrm{s}} G_{\mathrm{s}} \sum_{m\in\mathbb{M}} P_m+N\sigma_{\mathrm{a}}^2\right)+ NP_{\mathrm{E}}\le P_{\max,\mathrm{a}},
\end{aligned}
\end{equation}
where $P_{\max,\mathrm{a}}$ is the maximum threshold of the active-RIS reflection power. {From~\eqref{pconstaris}, the trade-off between $\alpha^2$ and the consumed power becomes explicit. Increasing $\alpha^2$ directly raises the required $P_{\mathrm{R}}$, which in turn tightens $P_{\max,\mathrm{a}}$. Consequently, the feasible power budget for signal transmission ($\sum_{m\in\mathbb M}P_m$) is reduced, thereby limiting the admissible $\{P_m\}$ and reflecting the inherent trade-off between higher amplification gain and increased aerial-active-RIS power consumption.}

%, is equivalent to the upper-bound constraint of $\sum_{m\in\mathbb{M}} P_m$.
%By following the same procedure with Section~\ref{4.1.1} and applying~(\ref{acSM}), the received SNR of the active aerial-RIS $\gamma_{m,\mathrm{a}}$ is given by
%\begin{equation}
%\gamma_{m,\mathrm{a}}\triangleq\frac{\alpha^2  P_m G_{\mathrm{s}}\left|\mathbf{h}^*\left(\mathbf{q}, {\pmb{\rho}}_m\right) \pmb{\Theta} \mathbf{H}\left(\mathbf{q}\right)\mathbf{v}_m\right|^2}{\alpha^2 \sigma_{\mathrm{a}}^2 N \beta(\mathbf{q}, \pmb{\rho}_m) +\sigma^2}.
%\label{acSNR}
%\end{equation}
%Here, we can notice that the signal is amplified by $\alpha$ and the denominator is extended with the power of the active noise signal $\alpha \mathbf{h}^*\left(\mathbf{q},\pmb{\rho}_m\right) \pmb{\Theta}\mathbf{n}_{\mathrm{a}}$. %Furthermore, the impact of total active noise $\alpha^2 \sigma_{\mathrm{a}}^2 N \beta(\mathbf{q}, \pmb{\rho}_m)$ is proportional to the number of active RIS elements $N$.
%Moreover, since $\beta\left(\mathbf{q}, \pmb{\rho}_m\right)\triangleq\frac{\beta_0}{\left|\left|\pmb{\rho}_{\mathrm{RIS}}-\pmb{\rho}_m\right|\right|_2^2}$, the effect of the active noise gets weaker when the distance between active RIS and the UAV-BS gets larger and the performance (transmit power) gain asymptotically converges to $\alpha^2$, which is equivalent to
%\begin{equation}
%\left|\left|\pmb{\rho}_{\mathrm{RIS}}-\pmb{\rho}_m\right|\right|\sim\infty~ \rightarrow ~\gamma_{m,\mathrm{a}} \sim \alpha^2 \gamma_m.
%\label{asymp}
%\end{equation}

To reliably support the UAV-BSs, the backhaul rate $\{R_m\}_{m\in\mathbb{M}}$ provided by the source should be balanced with the throughput $\{C_m\}_{m\in\mathbb{M}}$ of the fronthaul link: $R_m = C_m~\left(\forall m\in\mathbb{M}\right)$, where the balance between fronthaul and backhaul capacities is critical for optimal network performance with avoidance of bottleneck in B5G/6G wireless network~\cite{38.801, HBRIS}. By $R_m=\frac{B_{\mathrm{b}}}{M_0} \log_2 \left(1+\gamma_m \right)$, $P_m$ must satisfy the following constraint~(\ref{Constraint}).
\begin{figure*}
\begin{equation}
	\label{Constraint}
	\begin{aligned}
		P_m= \alpha^{-2} \left(2^{   \frac{M_0}{B_{\mathrm{b}}} C_m    }-1\right) \frac{  \left(\sigma^2+\alpha^2 \sigma_{\mathrm{a}}^2 N \frac{\beta_0}{\left|\left|\pmb{\rho}_{\mathrm{RIS}}-\pmb{\rho}_m\right|\right|_2^2} \right) \left|\left|\pmb{\rho}_{\mathrm{RIS}}-\pmb{\rho}_m \right|\right|_2^2 \left|\left|\pmb{\rho}_{\mathrm{RIS}}\right|\right|_2^2}{ G_{\mathrm{s}} \beta_0^2 M g\left(\Delta \phi_m\left(\pmb{\bar\rho}\right)\right)}~\left(\forall m\in\mathbb{M}\right)
	\end{aligned} 
\end{equation}
\hrule
\end{figure*}
Moreover, the maximum power budget of the source and aerial-active-RIS is given by~\cite{RISEE, aris5}
\begin{equation}
\label{powerC}
\begin{aligned}
(1):P_{\mathrm{tot,s}}=G_{\mathrm{s}}\sum_{m\in\mathbb{M}} P_m\le P_{\mathrm{max}},~(2):(\ref{pconstaris}),%P_{\mathrm{tot,a}} &=\alpha^2 \left(NM\beta_{\mathrm{s}} G_{\mathrm{s}} \sum_{m\in\mathbb{M}} P_m+N\sigma_{\mathrm{a}}^2\right)+ NP_{\mathrm{E}}\\
%&\le P_{\max,\mathrm{a}},
\end{aligned}
\end{equation}
respectively, wherein $\pmb{\mathrm{V}}\triangleq\left[\pmb{\mathrm{v}}_1 \cdots \pmb{\mathrm{v}}_{M_0}\right]\in\mathbb{C}^{M\times M_0}$ derived by~(\ref{vm}), $\pmb{\mathrm{P}}\triangleq\mathrm{diag}\left(\left\{P_m\right\}_{m\in\mathbb{M}}\right)\in\mathbb{R}^{M_0\times M_0}$ and $P_{\mathrm{max}}$ is the feasible threshold of the source transmit power, respectively.

The definition of energy-efficiency is given by~\cite{RISEE}
\begin{equation}
\label{etaee}
\eta\triangleq\frac{\sum_{m\in\mathbb{M}}C_m}{\underbrace{\sum_{m\in\mathbb{M}}(P_m+P_{\mathrm{UAV-BS},m})+P_{\mathrm{tot,a}}+P_{\mathrm{gBS}}+P_{\mathrm{AP}}}_{\triangleq P_0}},
\end{equation}
where the $\sum_{m\in\mathbb{M}}C_m$ comes from $R_m=C_m~(\forall m\in\mathbb{M})$, and {$P_{\mathrm{AP}}$, $P_{\mathrm{gBS}}$ and $\sum_{m\in\mathbb{M}}P_{\mathrm{UAV-BS},m}$ are the hardware-dissipated power used by the aerial platform that carries active-RIS, ground backhaul source and the UAV-BS $m$, respectively.} Hence, {since the numerator $\sum_{m\in\mathbb M} R_m$ becomes constant since the backhaul rate $\{R_m\}_{m\in\mathbb{M}}$ is balanced with the fronthaul throughput $\{C_m\}_{m\in\mathbb{M}}$}, we can conclude that the energy-efficiency maximization problem is equivalent to the minimization of $P_0$, the total power consumption of the whole system. From the denominator of~(\ref{etaee}), $P_0$ includes $P_{\mathrm{AP}}$, $P_{\mathrm{gBS}}$ and $\sum_{m\in\mathbb{M}}P_{\mathrm{UAV-BS},m}$. {In this paper, we adopt two key assumptions: (i) the transmit amplifiers operate within their linear region, and (ii) the circuit power consumption is independent of the communication rate~\cite{RISEE, actpower1, actpower2}. These assumptions are valid for most practical wireless communication systems~\cite{RISEE, aris4, aris5}, where amplifiers are typically designed to work within the linear portion of their transfer function, and where the hardware power consumption $P_{\mathrm{AP}}$\footnote{{Since we assumed that aerial platform is fixed (hovering), the consumed power, which is in general a function of velocity, becomes constant~\cite{battery}.}}, $P_{\mathrm{BS}}$ and $\{P_{\mathrm{UAV-BS},m}\}_{m\in\mathbb{M}}$ can be treated as constant offsets. Therefore, together with $\sum_{m\in\mathbb M}R_m=\sum_{m\in\mathbb M}C_m$, we exclude them from the energy-efficiency maximization process, and consider}
\begin{equation}
\label{objtot}
{\mathrm{obj}\triangleq \sum_{m\in\mathbb{M}}P_m+P_{\mathrm{tot,a}},}
\end{equation}
{which is total transmit and operation (reflection + hardware components) power of the source and aerial-active-RIS, respectively, as a objective function of the energy-efficiency maximization, which can be formulated by~(\ref{objectACT}).} From now on, we will use term ``total power" as~(\ref{objtot}).
\begin{figure*}
\begin{equation}
		\label{objectACT}
		\begin{aligned}
		& \underset{\mathbf{q}, \pmb{\bar\rho}, \left\{\bar{N}\right\}, \alpha, \left\{P_m\right\}_{m\in\mathbb{M}} }{\texttt{min}}~ \left(1+\alpha^2 NM\beta_{\mathrm{s}} G_{\mathrm{s}}\right)\sum\limits_{m\in\mathbb{M}} P_{m}+\alpha^2 N\sigma_{\mathrm{a}}^2+ NP_{\mathrm{E}}~(\triangleq\mathrm{obj}) \\
	%	& \text{~~~s.t.} ~(\ref{Constraint}).
		  \text{~~s.t.} ~&P_m= \alpha^{-2} \left(2^{   \frac{M_0}{B_{\mathrm{b}}} C_m    }-1\right) \frac{  \left(\sigma^2+\alpha^2 \sigma_{\mathrm{a}}^2 N \frac{\beta_0}{\left|\left|\pmb{\rho}_{\mathrm{RIS}}-\pmb{\rho}_m\right|\right|_2^2} \right) \left|\left|\pmb{\rho}_{\mathrm{RIS}}-\pmb{\rho}_m \right|\right|_2^2 \left|\left|\pmb{\rho}_{\mathrm{RIS}}\right|\right|_2^2}{ G_{\mathrm{s}} \beta_0^2 M g\left(\Delta \phi_m\left(\pmb{\bar\rho}\right)\right)}(\triangleq\mathrm{RHS_1})~\left(\forall m\in\mathbb{M}\right), \\~~~~~&\sum_{m\in\mathbb{M}} P_m\le \min \left\{ G_{\mathrm{s}}^{-1}P_{\mathrm{max}}, {\alpha^{-2} G_{\mathrm{s}}^{-1}}N^{-1}M^{-1}\beta_{\mathrm{s}}^{-1}\left(P_{\max,\mathrm{a}}-NP_{\mathrm{E}}-\alpha^2  N \sigma_{\mathrm{a}}^2 \right)\right\}(\triangleq\mathrm{RHS_2})
		\end{aligned}	
\end{equation}
\hrule
\end{figure*}
\begin{remark}
		\label{r0}
		Since $C_m>0$, $\left|\left|\pmb{\rho}_{\mathrm{RIS}}\right|\right|_2\ge H>0$ and $\left|\left|\pmb{\rho}_{\mathrm{RIS}} - \pmb{\rho}_m\right|\right|_2 >0$ ($\because \pmb{\rho}_{\mathrm{RIS}}$ is extremely close to the origin, as shown in \emph{Theorem}~\ref{cubiceq} and Fig.~\ref{fig_cubic}). Thereby, the right-hand side (RHS) of~(\ref{Constraint}) is non-zero, thus ensuring non-zero transmit power for every UAV-BS via aerial-active-RIS.
		\end{remark}
%From~(\ref{objectACT}), since $g(\cdot)\approx N^2$ for accurate RIS-phase alignment, we can conclude that the power $P_m$ follows $\mathcal{O}(N^{-1})$ due to the impact of total active noise $\alpha^2 \sigma_{\mathrm{a}}^2 N \beta(\mathbf{q}, \pmb{\rho}_m)$, compared to $\mathcal{O}(N^{-2})$ in~(\ref{optP}).Therefore, we can conclude that when $N$ is in the feasible range, the amount of active noise is small and the performance gain between passive and active RIS is tightly upper-bounded by $\alpha^2$. On the other hand, when $N$ grows substantially, the passive RIS eventually overtakes the active RIS due to its rapid decrease $\mathcal{O}(N^{-2})$ of transmit power.
%where the last constraint is the result of combining the maximum power constraint of source transmit power ($\sum_{m\in\mathbb{M}} P_m \le P_{\max}$) and active RIS ($P_{\mathrm{tot,a}}\le P_{\max,\mathrm{a}}$).

\section{Proposed Algorithm}
\subsection{Minimizing the Numerator}%: Determining $\mathbf{q}$}
Problem~(\ref{objectACT}) is highly nonlinear and non-convex due to its highly-cluttered $\mathrm{RHS}_1$ and $\mathrm{RHS}_2$ of the constraints. Therefore, we will approach the problem by first assuming that $\alpha$ is given, and minimizing the numerator and maximizing the denominator of $\mathrm{RHS}_1$, respectively, which leads to the minimization of the objective function. After that, we will minimize the total power with respect to $\alpha$. Through numerical simulations, we will show that the second constraint, which represents the upper-bound of the source transmit power, does not impact the feasibility of the problem. Therefore, our approach of focusing primarily on the first constraint is justified for energy-efficiency minimization.

It first leads to minimizing the numerator of $\mathrm{RHS}_1$. For given $\alpha$, if we multiply $1+\alpha^2 NM\beta_{\mathrm{s}} G_{\mathrm{s}}$ to the both sides of the first constraint of~(\ref{objectACT}), which becomes the first term of the objective function in~(\ref{objectACT}) related to $\sum_{m\in\mathbb{M}} P_m$, it becomes~(\ref{modfirst}).
\begin{figure*}
\begin{equation}
\begin{aligned}
\label{modfirst}
&\left(1+\alpha^2 NM\beta_{\mathrm{s}} G_{\mathrm{s}}\right)P_m\\
&= \alpha^{-2} \left(2^{   \frac{M_0}{B_{\mathrm{b}}} C_m    }-1\right) \frac{ \left(1+\alpha^2 NM\frac{\beta_0}{||\pmb{\rho}_{\mathrm{RIS}}||_2^2} G_{\mathrm{s}}\right) \left(\sigma^2+\alpha^2 \sigma_{\mathrm{a}}^2 N \frac{\beta_0}{\left|\left|\pmb{\rho}_{\mathrm{RIS}}-\pmb{\rho}_m\right|\right|_2^2} \right) \left|\left|\pmb{\rho}_{\mathrm{RIS}}-\pmb{\rho}_m \right|\right|_2^2 \left|\left|\pmb{\rho}_{\mathrm{RIS}}\right|\right|_2^2}{ G_{\mathrm{s}} \beta_0^2 M g\left(\Delta \phi_m\left(\pmb{\bar\rho}\right)\right)}~ \left(\forall m\in\mathbb{M}\right)
\end{aligned}
\end{equation}
\hrule
\end{figure*}
Hence, the RHS of~(\ref{modfirst}) is determined by
\begin{equation}
\label{quartmod}
(\left|\left|\pmb{\rho}_{\mathrm{RIS}}\right|\right|_2^2+\tilde{\Omega}_1)(\left|\left|\pmb{\rho}_{\mathrm{RIS}}-\pmb{\rho}_m \right|\right|_2^2 +\tilde{\Omega}_2),
\end{equation}
where
\begin{equation}
\label{omegadef}
\tilde{\Omega}_1 =\alpha^2 NM \beta_0 G_{\mathrm{s}}, \tilde{\Omega}_2 =\alpha^2 \frac{\sigma_{\mathrm{a}}^2 }{\sigma^2}N\beta_0.
\end{equation}
%It first leads to minimizing the numerator, where it is clear that we should first set $\alpha$ as
%\begin{equation}
%\label{optalpha}
%\alpha=\alpha_{\max}
%\end{equation}
%to minimize the numerator with respect to $\alpha$. Equation~(\ref{optalpha}) implies that while increasing $\alpha$ results in a rise in dynamic noise, expressed as $\alpha^2 \sigma_{\mathrm{a}}^2 N \frac{\beta_0}{\left|\left|\pmb{\rho}_{\mathrm{RIS}}-\pmb{\rho}_m\right|\right|_2^2}$, it simultaneously enhances $\alpha^2 g(=\tilde{g})$, which is the ``active beamforming gain" by active-aerial-RIS~\cite{HBRIS, UPARIS}, thereby amplifying the entire term.
{Thereafter, by letting $\mathbf{q}_m$ the 2D location of the aerial-active-RIS considering only UAV-BS $m~(\pmb{\rho}_{\mathrm{RIS}}=[\mathbf{q}_m^{\mathrm{T}}~H]^{\mathrm{T}})$, the numerator minimization becomes equivalent to:}
		\begin{equation}
			\label{num}
			\begin{aligned}
			&	 \underset{\mathbf{q}_m}{\texttt{min}}~(\left|\left|\pmb{\rho}_{\mathrm{RIS}}\right|\right|_2^2+\tilde{\Omega}_1)(\left|\left|\pmb{\rho}_{\mathrm{RIS}}-\pmb{\rho}_m \right|\right|_2^2 +\tilde{\Omega}_2) \\
			&	=\left(H^2 + \left|\left|\mathbf{q}_m\right|\right|_2^2 +\tilde{\Omega}_1\right)\left(\left(H-h_m\right)^2 + \left|\left|\mathbf{q}_m-\mathbf{w}_m \right|\right|_2^2+\tilde{\Omega}_2\right)\\
			& \text{~~s.t.} ~\left|\left|\mathbf{q}_m\right|\right|_2 \ll \delta \left|\left|\mathbf{w}_m\right|\right|_2,
			\end{aligned}	
		\end{equation}
%		where
%\begin{equation}
%K_0={\alpha_{\max}^2 \frac{\sigma_{\mathrm{a}}^2}{\sigma^2} N \beta_0}.
%\label{defK}
%\end{equation}
	where $\delta\ll1$ is constant. The constraint ``$||\mathbf{q}_m||_2 \ll \delta ||\mathbf{w}_m||_2$’’ is imposed to keep $\mathbf{q}_m$ near the source. {Positioning $\mathbf{q}_m$ with fixed $H$} close to the source ensures that the full-array RIS architecture ($\bar{N}=N$) is almost certainly utilized, which maximizes the minimum SNR~\cite{UAVRIS} and consequently reduces the total transmit power $(1+\alpha^2 N M \beta_{\mathrm{s}} G_{\mathrm{s}})\sum_{m\in\mathbb{M}} P_m$. %Note that when $\sigma_a=0$ and $\alpha=1$, it returns to passive-RIS-related problem and minimizing $\left|\left|\pmb{\rho}_{\mathrm{RIS}}-\pmb{\rho}_m \right|\right|_2^2 \left|\left|\pmb{\rho}_{\mathrm{RIS}}\right|\right|_2^2 $.
	Fortunately, we can find a practical solution for the problem, as stated in \emph{Theorem}~\ref{cubiceq}.
	\begin{theorem}
	\label{cubiceq}
	The solution of Problem~(\ref{num}) is given by
	\begin{equation}
	\label{Ps}
	\mathbf{q}_m^* = \kappa_m \mathbf{w}_m ,
	\end{equation}
	where
	\begin{equation}
	\begin{split}
	\label{xixixixi}
	&\kappa_m=\frac{1}{2}+2\sqrt{-\frac{a}{3}}\cos \left( \frac{1}{3} \cos^{-1} \left( \frac{3b}{2a} \sqrt {-\frac{3}{a}} \right)-\frac{4}{3}\pi  \right).
	\end{split}
	\end{equation}
	Here, $a$ and $b$ are given by
	\begin{equation}
	\label{xixixixixixi}
	a\triangleq\frac{1}{2} \left(\zeta_1^2 + \zeta_2^2 +\bar{\Omega}_1+\bar{\Omega}_2 \right) - \frac{1}{4}, b\triangleq\frac{1}{4} \left(\zeta_2^2 - \zeta_1^2 +\bar{\Omega}_2 - \bar{\Omega}_1\right),
		\end{equation}
		where
		\begin{equation}
		\label{ab}
		\zeta_1\triangleq\frac{H}{\left|\left|\mathbf{w}_m\right|\right|_2}, \zeta_2 \triangleq \frac{\left|H-h_m\right|}{\left|\left|\mathbf{w}_m\right|\right|_2}, \bar{\Omega}_i\triangleq ||\mathbf{w}_m||_2^{-2}\tilde{\Omega}_i~(i=1,2).
	\end{equation}
		\end{theorem}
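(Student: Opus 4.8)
The plan is to collapse the two–dimensional placement problem onto a line, reduce the resulting scalar first–order condition to a cubic, and then read off (\ref{xixixixi}) from the trigonometric (Vi\`ete) root formula. First I would show the minimizer is collinear with the origin and $\mathbf{w}_m$. Writing $\mathbf{q}_m=u\,\mathbf{w}_m/\|\mathbf{w}_m\|_2+\mathbf{q}_\perp$ with $\mathbf{q}_\perp\perp\mathbf{w}_m$, we have $\|\mathbf{q}_m\|_2^2=u^2+\|\mathbf{q}_\perp\|_2^2$ and $\|\mathbf{q}_m-\mathbf{w}_m\|_2^2=(u-\|\mathbf{w}_m\|_2)^2+\|\mathbf{q}_\perp\|_2^2$, so both bracketed factors in (\ref{num}) are positive and strictly increasing in $\|\mathbf{q}_\perp\|_2^2$. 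Their product is therefore strictly increasing in $\|\mathbf{q}_\perp\|_2^2$, forcing $\mathbf{q}_\perp=\mathbf 0$ at the optimum; hence $\mathbf{q}_m^*=\kappa_m\mathbf{w}_m$, which is exactly (\ref{Ps}).

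Next I would substitute $\mathbf{q}_m=\kappa\mathbf{w}_m$ and divide the objective by $\|\mathbf{w}_m\|_2^4$. Using (\ref{ab}), minimizing (\ref{num}) becomes minimizing $f(\kappa)=(\zeta_1^2+\bar\Omega_1+\kappa^2)(\zeta_2^2+\bar\Omega_2+(\kappa-1)^2)$ over $\kappa$. Setting $f'(\kappa)=0$ and clearing the positive factors gives the cubic $2\kappa^3-3\kappa^2+(1+\zeta_1^2+\zeta_2^2+\bar\Omega_1+\bar\Omega_2)\kappa-(\zeta_1^2+\bar\Omega_1)=0$. The shift $\kappa=t+\tfrac12$ removes the quadratic term, and a short computation identifies the coefficients of the depressed cubic $t^3+at+b=0$ as precisely the $a$ and $b$ in (\ref{xixixixixixi}); this matching is the arithmetic crux and the reason the constants $-\tfrac14$ and $\tfrac14$ appear there.

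I would then invoke the trigonometric solution of a depressed cubic in the three–real–root regime. In the deployment regime of interest the normalized quantities $\zeta_1,\zeta_2,\bar\Omega_1,\bar\Omega_2$ are small, since the coverage lies far from the source so that $\|\mathbf{w}_m\|_2$ dominates $H$, $|H-h_m|$ and the $\tilde\Omega_i$; consequently $a<0$ and the discriminant satisfies $-4a^3\ge 27b^2$, which is equivalent to the $\cos^{-1}$ argument in (\ref{xixixixi}) lying in $[-1,1]$. The three stationary points are then $t_k=2\sqrt{-a/3}\cos\!\big(\tfrac13\cos^{-1}(\tfrac{3b}{2a}\sqrt{-3/a})-\tfrac{2\pi k}{3}\big)$ for $k=0,1,2$, and adding $\tfrac12$ reproduces (\ref{xixixixi}) for the branch $k=2$.

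The main obstacle, and the reason the constraint $\|\mathbf{q}_m\|_2\ll\delta\|\mathbf{w}_m\|_2$ is imposed, is selecting the correct root. Since $f$ is a quartic with positive leading coefficient, its three stationary points are a local minimum, a local maximum, and a local minimum; the two minima sit near $\kappa\approx0$ (source) and $\kappa\approx1$ (UAV-BS), as is transparent from the limiting cubic $\kappa(2\kappa-1)(\kappa-1)$ obtained when $\zeta_i,\bar\Omega_i\to0$. The branch $k=2$ is exactly the smallest root, i.e.\ the near-source minimizer, and the constraint $\|\mathbf{q}_m\|_2\ll\delta\|\mathbf{w}_m\|_2$ is what singles it out over the near-UAV-BS minimum. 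I would finish by checking that this root is indeed small, for instance through the first-order estimate $\kappa_m\approx\zeta_1^2+\bar\Omega_1$, so that it is feasible and places $\pmb{\rho}_{\mathrm{RIS}}$ close to the origin, consistent with Remark~\ref{r0} and Fig.~\ref{fig_cubic}.
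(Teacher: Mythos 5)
Your proposal is correct and follows essentially the same route as the paper's proof: collinearity via the orthogonal (parallel/perpendicular) decomposition, reduction to the scalar quartic $g(\kappa_m)$, the cubic first-order condition whose depressed form has exactly the stated $a$ and $b$, the trigonometric three-real-root formula, and selection of the smallest root ($k=2$) as the near-source minimizer consistent with the constraint $\|\mathbf{q}_m\|_2\ll\delta\|\mathbf{w}_m\|_2$. If anything, you make explicit two steps the paper leaves implicit — the shift $\kappa=t+\tfrac12$ that produces $t^3+at+b=0$, and the exact branch ordering $t_2\le t_1\le t_0$ — whereas the paper reaches the same conclusions via the discriminant $\Delta=(a/3)^3+(b/2)^2<0$ and a first-order Taylor expansion of the three roots.
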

		\begin{proof}
See Appendix A.
	\end{proof}
We numerically confirm in Section IV that $\kappa_m$ remains positive yet very close to zero under the given assumptions, thereby validating that $\mathbf{q}_m^* = \kappa_m \mathbf{w}_m$ serves as an appropriate solution for the proposed power-minimization procedure.

 After obtaining $\left\{\mathbf{q}_m^*\right\}_{m\in\mathbb{M}}$, we need to determine a single $\mathbf{q}^*$ that achieves a Pareto-optimal for all $m\in\mathbb{M}$ in relation to~(\ref{num}). We adopt a global criterion approach that minimizes the total $\ell_2$-distances, thereby guiding the solution toward the Pareto front~\cite{MVO}.
\begin{equation}
	\label{devmin}
	\begin{aligned}
		&	 \underset{\mathbf{q}}{\texttt{min}}~\sum_{m\in \mathbb{M}} \left|\left|\mathbf{q}_m^*-\mathbf{q} \right|\right|_2 .
	\end{aligned}	
\end{equation}
Problem~(\ref{devmin}) is known as the Fermat-Torricelli problem, which is convex and can therefore be efficiently solved using Weiszfeld's algorithm~\cite{FT} with guaranteeing the optimal solution~\cite{WF2}. Consequently, by solving~(\ref{devmin}), we obtain $\mathbf{q}^*$, which serves as a suboptimal solution for the numerator minimization. {The obtained $\mathbf{q}^*$ provides a clear deployment insight: the aerial-active-RIS should be placed sufficiently closer to the source in terms of 2D distance to enhance the incident signal power, while maintaining a moderate distance from the UAV-BSs to balance the overall power consumption.}
%that minimizes the numerator for every $m\in\mathbb{M}$.
\begin{remark}
\label{r2}
To keep $||\mathbf{q}^*||_2$ small even in the presence of an outlier within $\{\mathbf{q}_m^*\}_{m\in\mathbb{M}}$, we minimize the sum of norms rather than the squared norms in~(\ref{devmin}), which further enhances the robustness of the solution~\cite{MVO}.
\end{remark}
\subsection{Maximizing the Denominator}%: Determining $\{\bar{N}\}$, $\pmb{\bar\rho}$, and $\pmb{\Theta}$}
We adopt the same methodology as described in Section III.B of~\cite{UAVRIS} to solve the denominator~(\ref{objectACT}) since the denominator is exactly same with the passive-RIS-implemented scenario in~\cite{UAVRIS}. This allows us to determine the sub-optimal values for $\{\bar{N}\}$, $\pmb{\bar\rho}~(\mathrm{or}~\{\pmb{\bar\rho}_i^*\}_{i=1}^L)$, and $\pmb{\Theta}$ for utilizing both full RIS array or $L$-times partitioned sub-array scenario.
 \subsection{Determining $\alpha$ and $\{P_m\}_{m\in\mathbb{M}}$}
 Since we have determined $\mathbf{q}^*, \{\bar{N}\}, \pmb{\bar\rho}~(\mathrm{or}~\{\pmb{\bar\rho}_i^*\}_{i=1}^L), \pmb\Theta$, the optimal $\alpha^*$ is determined by following \emph{Theorem}~\ref{thmalpha}.
 
 	\begin{theorem}
	\label{thmalpha}
	For given $\mathbf{q}^*, \{\bar{N}\}, \pmb{\bar\rho}~(\mathrm{or}~\{\pmb{\bar\rho}_i^*\}_{i=1}^L), \pmb\Theta$, the optimum $\alpha$ that minimizes $\mathrm{obj}$ is given by
\begin{equation}
\label{optalphathm}
\alpha^*=\min\left\{\sqrt[4]{\frac{\sum_{m\in\mathbb{M}}\Omega_0\left|\left|\pmb{\rho}_{\mathrm{RIS}}\right|\right|_2^2 \left|\left|\pmb{\rho}_{\mathrm{RIS}}-\pmb{\rho}_m \right|\right|_2^2 }{N\sigma_{\mathrm{a}}^2+\sum_{m\in\mathbb{M}}\Omega_0{\Omega_1}{\Omega_2}}}, \alpha_{\max}\right\},
\end{equation}
 where
 \begin{equation}
 \begin{aligned}
 \label{omegaexplain}
 &\Omega_0=\sigma^{2}\frac{2^{   \frac{M_0}{B_{\mathrm{b}}} C_m    }-1}{G_{\mathrm{s}} \beta_0^2 M g\left(\Delta \phi_m^*\right)}\\
 &{\Omega_1}=\alpha^{-2}\tilde{\Omega}_1= NM \beta_0 G_{\mathrm{s}},~{\Omega_2}=\alpha^{-2}\tilde{\Omega}_2=\frac{\sigma_{\mathrm{a}}^2 }{\sigma^2}N\beta_0.
 \end{aligned}
 \end{equation}
		\end{theorem}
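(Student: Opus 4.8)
The plan is to exploit the fact that once $\mathbf{q}^*$, $\{\bar N\}$, $\pmb{\bar\rho}$ and $\pmb\Theta$ are fixed, every quantity appearing in the objective of~(\ref{objectACT}) other than $\alpha$ is a constant: in particular the distances $\left\|\pmb{\rho}_{\mathrm{RIS}}\right\|_2$ and $\left\|\pmb{\rho}_{\mathrm{RIS}}-\pmb{\rho}_m\right\|_2$, the beamforming gain $g(\Delta\phi_m^*)$, and hence $\Omega_0,\Omega_1,\Omega_2$ in~(\ref{omegaexplain}) no longer vary with $\alpha$. The whole problem therefore collapses to a one-dimensional minimization over $\alpha$, and the first task is to reduce $\mathrm{obj}$ to an explicit Laurent polynomial in $\alpha$.

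First I would substitute the rate constraint into the $\alpha$-dependent first term of the objective. Rather than using~(\ref{Constraint}) directly, I would start from its premultiplied form~(\ref{modfirst}), which already expresses $\left(1+\alpha^2 NM\beta_{\mathrm{s}}G_{\mathrm{s}}\right)P_m$. Using $\beta_{\mathrm{s}}=\beta_0\left\|\pmb{\rho}_{\mathrm{RIS}}\right\|_2^{-2}$ together with $\tilde\Omega_1=\alpha^2\Omega_1$ and $\tilde\Omega_2=\alpha^2\Omega_2$ from~(\ref{omegadef}) and~(\ref{omegaexplain}), the two bracketed factors in~(\ref{modfirst}) collapse to $\left\|\pmb{\rho}_{\mathrm{RIS}}\right\|_2^2+\alpha^2\Omega_1$ and $\left\|\pmb{\rho}_{\mathrm{RIS}}-\pmb{\rho}_m\right\|_2^2+\alpha^2\Omega_2$ after the $\left\|\pmb{\rho}_{\mathrm{RIS}}\right\|_2^2$ and $\left\|\pmb{\rho}_{\mathrm{RIS}}-\pmb{\rho}_m\right\|_2^2$ factors in the numerator cancel the denominators, so that
\begin{equation*}
\left(1+\alpha^2 NM\beta_{\mathrm{s}}G_{\mathrm{s}}\right)P_m=\alpha^{-2}\Omega_0\left(\left\|\pmb{\rho}_{\mathrm{RIS}}\right\|_2^2+\alpha^2\Omega_1\right)\left(\left\|\pmb{\rho}_{\mathrm{RIS}}-\pmb{\rho}_m\right\|_2^2+\alpha^2\Omega_2\right).
\end{equation*}
Expanding the product and distributing $\alpha^{-2}$, the two cross terms combine into an $\alpha$-independent constant, leaving only an $\alpha^{-2}$ contribution $\alpha^{-2}\Omega_0\left\|\pmb{\rho}_{\mathrm{RIS}}\right\|_2^2\left\|\pmb{\rho}_{\mathrm{RIS}}-\pmb{\rho}_m\right\|_2^2$ and an $\alpha^2$ contribution $\alpha^2\Omega_0\Omega_1\Omega_2$. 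Adding the residual $\alpha^2 N\sigma_{\mathrm{a}}^2+NP_{\mathrm{E}}$ of $\mathrm{obj}$ and summing over $m\in\mathbb{M}$ yields
\begin{equation*}
\mathrm{obj}(\alpha)=c_{-1}\alpha^{-2}+c_0+c_1\alpha^2,
\end{equation*}
with $c_{-1}=\sum_{m\in\mathbb{M}}\Omega_0\left\|\pmb{\rho}_{\mathrm{RIS}}\right\|_2^2\left\|\pmb{\rho}_{\mathrm{RIS}}-\pmb{\rho}_m\right\|_2^2>0$, $c_1=N\sigma_{\mathrm{a}}^2+\sum_{m\in\mathbb{M}}\Omega_0\Omega_1\Omega_2>0$, and $c_0$ collecting the $\alpha$-independent pieces.

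Next I would minimize this profile. Substituting $x\triangleq\alpha^2>0$ gives $c_{-1}x^{-1}+c_0+c_1 x$, which is strictly convex on $(0,\infty)$ because $c_{-1},c_1>0$; setting its derivative to zero yields the unique stationary point $x^\star=\sqrt{c_{-1}/c_1}$, i.e. $\alpha=\left(c_{-1}/c_1\right)^{1/4}$, exactly the interior expression in~(\ref{optalphathm}). Since $\mathrm{obj}$ is decreasing for $\alpha<\left(c_{-1}/c_1\right)^{1/4}$ and increasing thereafter, imposing the active-RIS feasibility bound $\alpha\le\alpha_{\max}$ caps the minimizer at $\alpha_{\max}$ whenever the unconstrained optimum exceeds it, producing the outer $\min\{\cdot,\alpha_{\max}\}$ and completing the argument.

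The routine but error-prone part — and essentially the \emph{only} real obstacle — is the algebraic bookkeeping in the reduction step: one must verify that after premultiplying and expanding, the two cross terms $\Omega_0\left(\left\|\pmb{\rho}_{\mathrm{RIS}}\right\|_2^2\Omega_2+\left\|\pmb{\rho}_{\mathrm{RIS}}-\pmb{\rho}_m\right\|_2^2\Omega_1\right)$ are genuinely independent of $\alpha$, so that $\alpha$-dependence survives only in the $\alpha^{\pm2}$ terms; otherwise the clean convex trinomial would not emerge and the closed form would not hold. For completeness I would also note that the derived $\alpha^\star$ should be checked against the lower feasibility requirement $\alpha>1$ of the active model, though in the operating regime of interest the binding constraint is the upper bound $\alpha_{\max}$, consistent with the stated result.
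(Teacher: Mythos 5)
Your proposal is correct and takes essentially the same route as the paper's Appendix B: both reduce $\mathrm{obj}$ to the identical Laurent form $c_1\alpha^2 + c_0 + c_{-1}\alpha^{-2}$ with $c_1 = N\sigma_{\mathrm{a}}^2+\sum_{m\in\mathbb{M}}\Omega_0\Omega_1\Omega_2$ and $c_{-1}=\sum_{m\in\mathbb{M}}\Omega_0\left|\left|\pmb{\rho}_{\mathrm{RIS}}\right|\right|_2^2\left|\left|\pmb{\rho}_{\mathrm{RIS}}-\pmb{\rho}_m\right|\right|_2^2$, exactly as in~(\ref{alphaobj}). The only difference is in the final one-dimensional step, where the paper invokes the AM--GM equality condition~(\ref{arge}) while you use strict convexity in $x=\alpha^2$ and a first-order condition; these are equivalent, and your explicit monotonicity remark even makes the clipping at $\alpha_{\max}$ slightly more rigorous than the paper's unstated justification.
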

		\begin{proof}
See Appendix B.
	\end{proof}

Since we determine all the variables $\mathbf{q}, \pmb{\bar\rho}, \left\{\bar{N}\right\}, \alpha^*$, the power $\{P_m\}_{m\in\mathbb{M}}$ is given by $\mathrm{RHS}_1$, which can be expressed by~(\ref{optP}), where $\pmb{\rho}_{\mathrm{RIS}}^*=\left[\mathbf{q}^{* \mathrm{T}}~H\right]^{\mathrm{T}}$ and $\Delta\phi_m^*$ is defined by
\begin{equation}
\label{finalphi}
\Delta\phi_m^* \triangleq
\begin{cases}
\Delta\phi_m\left(\pmb{\bar\rho}^*\right)~\left(\bar{N}=N~(\text{full-array})\right) \\
\Delta\phi_m \left(\pmb{\bar\rho}_i^*\right)~\left(\bar{N}<N~(\text{sub-array}, m\in\mathbb{M}_i)\right),
\end{cases}
\end{equation}
where the full- and $L$-times-partitioned sub-array scenario with phase-align points $\{\pmb{\bar\rho}_i^*\}_{i=1}^L$ and partitions of UAV-BS $\{\mathbb{M}_i\}_{i=1}^L (\cup_i\mathbb{M}_i=\mathbb{M})$ are defined in Section III.B in~\cite{HBRIS}. Thus, by substituting~(\ref{optP}) and $\alpha^*$ to $\mathrm{obj}$, we can achieve the optimal total power by utilizing aerial-active-RIS.
\begin{figure*}
\begin{equation}
	\label{optP}
	P_m ^*= \alpha^{*-2}\left(2^{   \frac{M_0}{B_{\mathrm{b}}} C_m    }-1\right) \frac{  \left(\sigma^2+\alpha^{*2} \sigma_{\mathrm{a}}^2 N \frac{\beta_0}{\left|\left|\pmb{\rho}_{\mathrm{RIS}}^*-\pmb{\rho}_m\right|\right|_2^2} \right) \left|\left|\pmb{\rho}_{\mathrm{RIS}}^*-\pmb{\rho}_m \right|\right|_2^2 \left|\left|\pmb{\rho}_{\mathrm{RIS}}^*\right|\right|_2^2}{ G_{\mathrm{s}} \beta_0^2 M g\left(\Delta \phi_m^*\right) }~\left(\forall m\in\mathbb{M}\right)
\end{equation}
\hrule
\end{figure*}

\subsection{Conditions for Feasibility}
Although the power is to be positive in \textit{Remark}~\ref{r0}, we cannot claim the strict feasibility since we cannot always guarantee the upper-bound of $\sum_{m\in\mathbb{M}} P_m$ by~(\ref{optP}). Specifically, the backhaul-rate-ensuring constraint leads to:
\begin{equation}
\label{rec}
R_m= C_m \rightarrow P_m = [\mathrm{RHS}_1]_m~(\forall m\in\mathbb{M}),
\end{equation}
where $[\mathrm{RHS}_1]_m$ is the $\mathrm{RHS}_1$ corresponding to $P_m$ in the first constraint of~(\ref{objectACT}), and the source/active-RIS power constraint is trivially given by
\begin{equation}
\label{pc}
\sum_{m\in\mathbb{M}}P_m\le \mathrm{RHS}_2.
\end{equation}
%where $\mathrm{RHS}_1, \mathrm{RHS}_2$ is the RHS of the first and second constraints in~(\ref{objectACT}), respectively.
From~(\ref{rec}) and~(\ref{pc}), it is clear that although the individual transmit power meets the constraint in~(\ref{rec}) by choosing the lower-bound itself as a transmit power value, we cannot guarantee $\sum_{m\in\mathbb{M}}[{\mathrm{RHS}_1}]_m\le\mathrm{RHS_2}$~\cite{boyd, RISEE, Din}. {We, however, offer some additional clarification on whether the feasibility holds or not. Using the closed-form rate-matching transmit power in~\eqref{optP}, the total source transmit power can be expressed as
\begin{equation}
\label{eq:rb_sumPm_split}
\sum_{m\in\mathbb M} P_m^*=\frac{d_s^2}{G_s\beta_0^2 M}
\Bigg(
\alpha^{-2}\sigma^2 \sum_{m\in\mathcal M}\frac{\Gamma_m d_m^2}{g_m}
+\sigma_a^2N\beta_0 \sum_{m\in\mathcal M}\frac{\Gamma_m}{g_m}
\Bigg),
\end{equation}
where $\Gamma_m\triangleq 2^{\frac{M_0}{B_b}C_m}-1$, $d_s^2=\|\boldsymbol\rho_{\mathrm{RIS}}^*\|_2^2$, $d_m^2=\|\boldsymbol\rho_{\mathrm{RIS}}^*-\boldsymbol\rho_m\|_2^2$, and $g_m=g(\Delta\phi_m^*)$. Substituting~\eqref{eq:rb_sumPm_split} into the two terms in the left-hand side in~\eqref{objectACT} yields two sufficient feasibility conditions with respect to $\mathrm{RHS}_2$.}

{\textbf{(1. Source power budget $P_{\max}$)}
\begin{equation}
\label{eq:rb_box1_sub}
P_{\max}
\ge
\frac{d_s^2}{\beta_0^2 M}
\Bigg(
\alpha^{-2}\sigma^2 \sum_{m\in\mathcal M}\frac{\Gamma_m d_m^2}{g_m}
+\sigma_a^2N\beta_0 \sum_{m\in\mathcal M}\frac{\Gamma_m}{g_m}
\Bigg).
\end{equation}}

{\textbf{(2. Aerial-Active-RIS power budget $P_{\max,a}$)}
\begin{equation}
\label{eq:rb_box2_sub}
\begin{aligned}
P_{\max,a}\ge&
NP_E+\alpha^2N\sigma_a^2\\
&+N\beta_s\frac{d_s^2}{\beta_0^2}\Bigg(\sigma^2 \sum_{m\in\mathcal M}\frac{\Gamma_m d_m^2}{g_m}
+\alpha^2\sigma_a^2N\beta_0 \sum_{m\in\mathcal M}\frac{\Gamma_m}{g_m}
\Bigg).
\end{aligned}
\end{equation}
Equations~\eqref{eq:rb_box1_sub} and~\eqref{eq:rb_box2_sub} provide an explicit post-solution feasibility check: 
\begin{itemize}
\item Once the optimal variables are obtained, $\sum_m P_m^*$ is evaluated via~\eqref{eq:rb_sumPm_split}
\item The constraint $\sum_m P_m^*\le\mathrm{RHS}_2$ is verified by checking~\eqref{eq:rb_box1_sub} and~\eqref{eq:rb_box2_sub}.
\end{itemize}
Equation~\eqref{eq:rb_sumPm_split} reveals that $\sum_m P_m^*$ consists of two components:
an $\alpha^{-2}$-decreasing term associated with thermal noise and an $\alpha$-independent floor
term induced by the amplified dynamic noise. Consequently, increasing $\alpha$ does not cause
$\sum_m P_m^*$ to diverge. Furthermore, when the phase-alignment/partitioning design keeps all UAV-BSs within the RIS main lobe, the passive beamforming gain satisfies $g_m\simeq N^2$~\cite{HBRIS}. In this regime,
\begin{equation}
\sum_{m}\frac{\Gamma_m}{g_m}\approx \frac{1}{N^2}\sum_m\Gamma_m,~\sum_{m}\frac{\Gamma_m d_m^2}{g_m}\approx \frac{1}{N^2}\sum_m\Gamma_m d_m^2,
\label{smgm}
\end{equation}
so that the last term of the right-hand side in both~\eqref{eq:rb_box1_sub} and~\eqref{eq:rb_box2_sub} scale down with $N$. In Section IV (Fig.~\ref{fig_feas}), we verify that the practical transmit power and total power obtained by the proposed algorithm remain well below the feasible threshold, achieving a feasibility rate of 100\% even with conservative power budget. This confirms the almost-sure feasibility of our approach, thereby rendering the consideration of infeasible scenarios negligible.}
	 \begin{center}
	\begin{table}[t] 
	\centering
		\caption{Simulation Parameters}
		\begin{tabular}{|>{\centering } m{1.8cm} |>{\centering} m{3.9cm} |>{\centering} m{1.8cm} | }
			\hline
			\textbf{Paramet{\tiny }er} & \textbf{Description} & \textbf{Value}
			\tabularnewline
			\hline
			%	Author & Type & Parameters & Contribution/Conclusion \\ \hline
			\centering			$B_{\mathrm{b}}$  & Bandwidth of the backhaul link\\(unless referred) & 50 (MHz)  \tabularnewline \hline
			\centering			$M_0$  & Number of UAV-BSs & $\gtrapprox4$~\cite{Noh}  \tabularnewline \hline
		%	\centering			$\beta_0$  & Reference path loss\\for sub-6~GHz backhaul & -43.32 (dB)  \tabularnewline \hline
			\centering			$\mathcal{G}$  & Targeted urban region& $500\times 500$ (m)  \tabularnewline \hline
			\centering			$\pmb{\rho}_{\mathcal{G}}$  & Center of $\mathcal{G}$ (unless referred) &$[1000 ~0]^{\mathrm{T}}$ (m)  \tabularnewline \hline
		%	\centering                   $\delta$ & Upper-bound of\\restricting $\{\mathbf{q}_m^*\}_{m\in\mathbb{M}}$ in~(\ref{num}) & $10^{-1}$ \tabularnewline \hline
			\centering			$(P_{\max}, P_{\max,\mathrm{a}})$  & Feasible threshold of\\source transmit power and active-RIS power consumption& 20 (dBm)~\cite{aris5} \tabularnewline \hline			
			\centering			$G_{\max}$  & Maximum directional gain & 8 (dB) \tabularnewline \hline
			%\centering			$\left(\mathrm{SLA_v}, A_{\max}\right)$  & Vertical side-lobe attenuation and maximum attenuation & 30 (dB)~\cite{NR} \tabularnewline \hline	
			%			\centering			$(\theta_{\mathrm{H}}, \phi_{\mathrm{H}})$  & Vertical and horizontal HPBWs of the antenna pattern & 65 ($^\circ$)~\cite{NR} \tabularnewline \hline	
			\centering		$H$ & Height of aerial-RIS\\(unless referred) & 180 (m) \tabularnewline \hline
			\centering		$N$ & Number of RIS elements\\(unless referred) & 300 \tabularnewline \hline
			\centering		$M$ & Number of source antennas & 16 \tabularnewline \hline
	%		\centering		$L_{\max}$ & Upper-bound of\\one-dimensional search & 5 \tabularnewline \hline
			\centering      $\left(d_{\mathrm{s}}, d_{\mathrm{RIS}}\right)$ & Source antenna and\\RIS element separations & $\left(\frac{\lambda}{2}, \frac{\lambda}{10}\right)$~\cite{meta} \tabularnewline \hline
			\centering		$\alpha_{\max}^2$ & Maximum amplification gain of the active-RIS & 40~(dB)~\cite{aris5} \tabularnewline \hline
	%		\centering		$\alpha_{^2$ & Maximum amplification gain of the active-RIS & 40~(dB)~\cite{aris5} \tabularnewline \hline		
		\centering		$P_{\mathrm{E}}$ & Power consumed by\\single active hardware component & -3.8~(dBm)\\~\cite{aris4, aris5} \tabularnewline \hline
			\centering		$\Delta_0$ & Distance from $\alpha^{*2}$ in dB for non-optimality comparison & -5~(dB) \tabularnewline \hline	
			\centering		$\sigma_{\mathrm{a}}^2$ & Dynamic noise of the\\active-RIS elements & -80~(dBm)\\~\cite{aris4, aris5} \tabularnewline \hline
			\end{tabular}
		\label{SimPar}
	\end{table}
\end{center}

		\subsection{Analysis on Computational Complexity}
The computational complexity of the proposed algorithm is divided into three main stages. In Section III.A, we begin by selecting $\left\{\kappa_m\right\}_{m\in\mathbb{M}}$ according to~(\ref{xixixixi}), which involves a complexity of $\mathcal{O}\left(M_0\right)$. Subsequently, we solve for $\mathbf{q}^*$ using~(\ref{devmin}), which has an upper-bounded complexity of $\mathcal{O}\left(I_{\mathbb{M}} M_0\right)$, where $I_{\left(\cdot\right)}$ represents the number of iterations required by Weiszfeld's algorithm for the given set~\cite{HBRIS, FT, WF2}. In Section III.B, the complexity is identical to that presented in Section III.B of~\cite{HBRIS}, and is given by $\mathcal{O}((L_{\max} + I_L + M_0) M_0 + L)$. Here, $L_{\max}$ denotes the maximum candidate for $L$, identified through a one-dimensional search, and $I_{L} \triangleq \max_{i\in\{1,\cdots,L\}} I_{\mathbb{M}_i}$. The RIS phase alignment step, performed via~(\ref{phase}), incurs a complexity of $\mathcal{O}(N)$. Lastly, in Section~III.C, we optimize $\alpha^*$ and $\left\{P_m^*\right\}_{m\in\mathbb{M}}$ through~(\ref{optalphathm}) and~(\ref{optP}), each requiring $\mathcal{O}\left(M_0\right)$. Thus, the overall computational complexity is bounded as
\begin{equation}
\label{FC}
\begin{split}
&\mathcal{O}\left(\left(I_{\mathbb{M}} + L_{\max} + I_{L} + M_0\right) M_0 + L + N\right) \\
&\approx \mathcal{O}\left(\left(I_{\mathbb{M}} + I_{L} + M_0\right) M_0 + N\right)~\left(\because I_{\left(\cdot\right)} > L_{\max} \ge L\right),
\end{split}
\end{equation}
which is within quadratic order. Therefore, we can conclude that the proposed algorithm is both energy-efficient and computationally efficient.

	\begin{figure}[t]
	\begin{center}
		\includegraphics[width=0.9\columnwidth,keepaspectratio]%
		{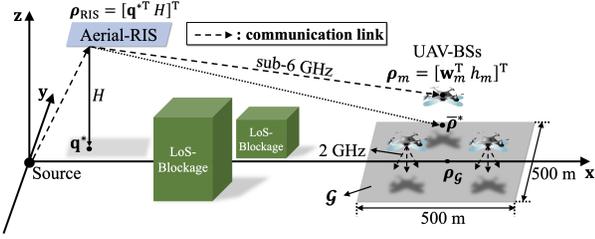}
		\caption{Simulated aerial-active-RIS configuration with an $N$-element active-RIS and $M_0$ UAV-BSs.}
		\label{fig_sim}
	\end{center}
\end{figure}
	\begin{figure}[t]
	\begin{center}
		\includegraphics[width=0.7\columnwidth,keepaspectratio]%
		{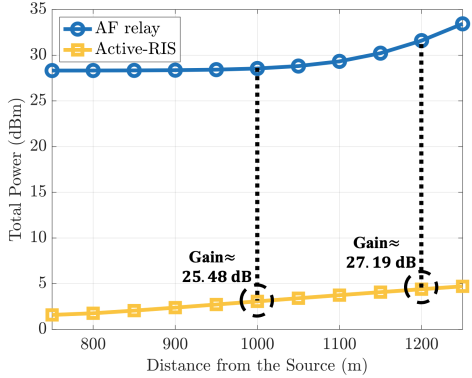}
		\caption{Simulated aerial-active-RIS configuration compared to $M$-array aerial-AF-relay.}
		\label{fig_af}
	\end{center}
\end{figure}
\begin{figure}[t]
	\begin{center}
		\includegraphics[width=0.75\columnwidth,keepaspectratio]%
		{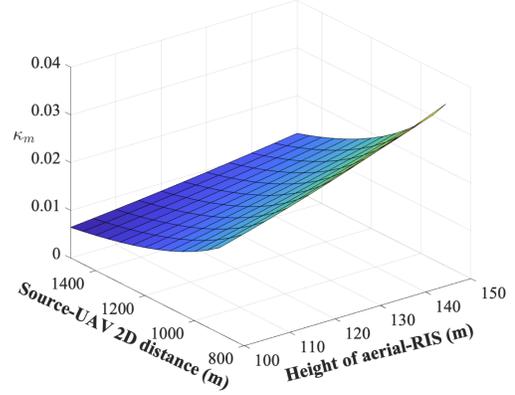}
		\caption{Variation of $\kappa_m$ as a function of the aerial-active-RIS altitude and the 2D source-UAV distance with $h_m = 45~\textrm{m}$.}
		\label{fig_cubic}
	\end{center}
\end{figure}
\section{Numerical Results}\label{4.4.4}
\begin{figure}[t]
	\begin{center}
		\includegraphics[width=0.98\columnwidth,keepaspectratio]%
		{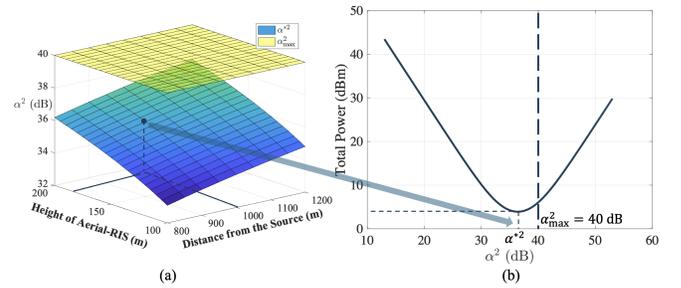}
		\caption{Behavior of (a) $\alpha^{*2}$ with $\alpha_{\max}^2=40$~dB with respect to the height of aerial-RIS and the source-UAV 2D distance and (b) optimal total power with respect to $\alpha^2$ with minimum at $\alpha^{*2}$.}
		\label{fig_alpha}
	\end{center}
\end{figure}
\subsection{Simulation Setup}
We considered $10^3$ independent realizations of randomly distributed users and their associated UAV-BSs~\cite{Noh}. The fronthaul and backhaul links were assumed to operate over 2~GHz and sub-6~GHz frequency bands, respectively~\cite{NR}. Furthermore, the directional antennas at source were assumed to follow the radiation pattern described in~\cite{NR}. Accordingly, the directional antenna gain $G_{\mathrm{s}}(\theta,\phi)$ with maximum directional gain $G_{\max}$ can be expressed as	 
\begin{equation}
\label{ap}
G_{\mathrm{s}}\left(\theta, \phi\right)=G_{\max} -\min\left(A_{\mathrm{v}} \left(\theta\right)+A_{\mathrm{h}} \left(\phi\right) , A_{\max}\right).
\end{equation}
where the vertical and horizontal attenuations $A_{\mathrm{v}}$ and $A_{\mathrm{h}}$, respectively, are given by
\begin{equation}
\begin{split}
\label{antenna}
\begin{cases}
A_{\mathrm{v}} \left(\theta\right) = \min \left(12 \left( \frac{\theta-90^{\circ}}{\theta_{\mathrm{H}}} \right)^2 ,~\mathrm{SLA}_{\mathrm{v}} \right)\\
A_{\mathrm{h}} \left(\phi\right) = \min \left(12 \left( \frac{\phi}{\phi_{\mathrm{H}}} \right)^2 ,~A_{\max} \right),
\end{cases}
\end{split}
\end{equation}
where $\theta \in [0^{\circ}, 180^{\circ}]$ and $\phi \in [-180^{\circ}, 180^{\circ})$ denote the vertical and horizontal angles, $\theta_{\mathrm{H}}$ and $\phi_{\mathrm{H}}$ represent the HPBWs in the vertical and horizontal domains, respectively, and $\mathrm{SLA}_{\mathrm{v}}$ and $A_{\max}$ denote the vertical side-lobe and maximum attenuations, respectively. 

Under this setup, we conducted a numerical performance comparison between the proposed aerial-active-RIS scheme and the passive-RIS-based algorithm in~\cite{HBRIS}, which aims to minimize the transmit power at the source.
%where the active-RIS-related parameters are given with $\alpha^2=12,~15$~dB, $P_{\max,\mathrm{a}}=P_{\max}=30$~dBW, $P_{\mathrm{E}}=-5$~dBm, and $\sigma_{\mathrm{a}}^2=-80$~dBm~\cite{aris4, aris5}.
%Here, by $P_{\max,\mathrm{a}}=P_{\max}$, the maximum power constraint in~(\ref{objectACT}) can be transformed into:
%\begin{equation}
%\begin{aligned}
%\label{simACT}
%\sum_{m\in\mathbb{M}} P_m&\le \min \left\{ G_{\mathrm{s}}^{-1}P_{\mathrm{max}}, {\alpha^{-2} G_{\mathrm{s}}^{-1}}\left(P_{\max,\mathrm{a}}-NP_{\mathrm{E}}-\alpha^2 N\sigma_{\mathrm{a}}^2 \right)\right\}\\
%&= {\alpha^{-2} G_{\mathrm{s}}^{-1}}\left(P_{\max,\mathrm{a}}-NP_{\mathrm{E}}-\alpha^2 N\sigma_{\mathrm{a}}^2 \right),
%\end{aligned}
%\end{equation}
%since $\alpha^{-2}<1$ and $P_{\max} \ge P_{\max,\mathrm{a}}(=P_{\max})-NP_{\mathrm{E}}-\alpha^2  N\sigma_{\mathrm{a}}^2$. 
Under same achievable rate $\{C_m\}_{m\in\mathbb{M}}$, the energy-efficiency $\eta_{\mathrm{p}}$ for passive-RIS is given by~\cite{RISEE, HBRIS}
\begin{equation}
\label{etaP}
\eta_{\mathrm{p}}=\frac{\sum_{m\in\mathbb{M}}C_m}{\sum_{m\in\mathbb{M}}(P_{m,\mathrm{p}}+P_{\mathrm{UAV-BS},m})+P_{\mathrm{gBS}}+P_{\mathrm{AP}}},
\end{equation}
where $P_{m,\mathrm{p}}$ is the source transmit power corresponds to UAV-BS $m$ with aerial-passive-RIS, which also makes the backhaul rate $\{C_m\}_{m\in\mathbb{M}}$ for each UAV-BS $m$~\cite{HBRIS}. Thus, it is reasonable to compare $\sum_{m\in\mathbb{M}} P_{m,\mathrm{p}}$ (in~\cite{HBRIS}) with $\mathrm{obj}$ for aerial-passive/active-RIS scenarios, respectively, which allows for a fair comparison of energy-efficiency minimization between the two cases. The simulation environment based on the parameters is illustrated in Fig.~\ref{fig_sim}, and the detailed parameters are given in Table~\ref{SimPar}.
\begin{figure*}[t]
	\begin{center}
		\includegraphics[width=1.98\columnwidth,keepaspectratio]%
		{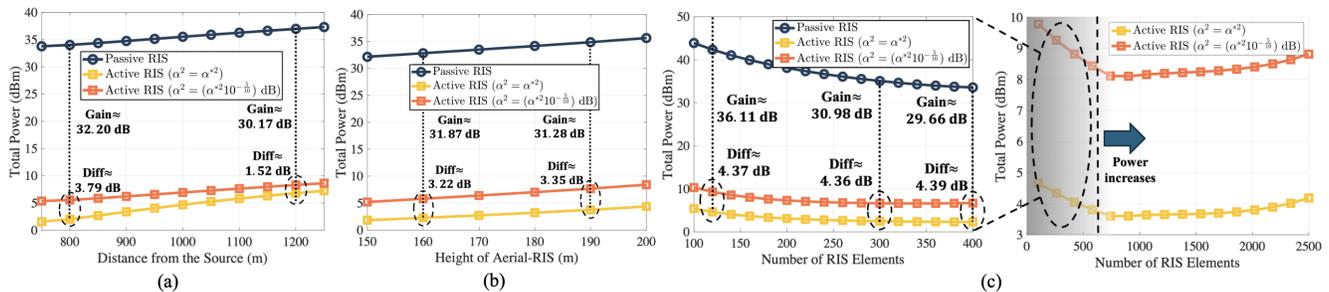}
		\caption{Total power with respect to (a) the distance of $\pmb{\rho}_{\mathcal{G}}$ from the source (b) the height of aerial-RIS (c) the number of RIS elements under the implementation of proposed aerial-active-RIS with benchmarks.}
		\label{actCenter}
	\end{center}
\end{figure*}
%\begin{figure}[t]
%	\begin{center}
%		\includegraphics[width=0.7\columnwidth,keepaspectratio]%
%		{fig/CenterACT33}
%		\caption{Total power with respect to the distance of $\pmb{\rho}_{\mathcal{G}}$ from the source under the implementation of active/passive RIS.}
%		\label{actCenter}
%	\end{center}
%\end{figure}
\subsection{Active-RIS vs. AF Relay: Performance Comparison}
It is reasonable to explore the scenario where the aerial-active-RIS is not deployed. In such a case, an AF relay on the aerial platform serves as a natural alternative, as it similarly receives the incoming signal and forwards an amplified one to the destination. However, a closer comparison reveals that the proposed aerial-active-RIS architecture can offer greater advantages over the aerial-AF-relay. {Specifically, the total AF-relay power consumption is expressed as~\cite{cute}
\begin{equation}
\label{tafp}
\begin{aligned}
P_{\mathrm{tot,AF}}=& \sum_{m\in\mathbb M} (P_{m,\mathrm{AF}}+P_{\mathrm{UAV-BS}-m})+ P_{\mathrm{R,AF}}\\
&+ P_{\mathrm{gBS}}+ P_{\mathrm{circ,AF}}+P_{\mathrm{AP}},
\end{aligned}
\end{equation}
where $\sum_{m\in\mathbb M}P_{m,\mathrm{AF}}$ is the source transmit power needed to support the target rate, $P_{\mathrm{R,AF}}$ is the relay transmit power (amplification power) required under the AF protocol, $P_{\mathrm{circ,AF}}$ is hardware power cost of the $N$-element AF relay~\cite{cute}:
\begin{equation}
\label{afpower}
P_{\mathrm{circ,AF}}=N(P_{\mathrm{DAC}} + P_{\mathrm{mix}} + P_{\mathrm{filt}}) + P_{\mathrm{syn}},
\end{equation}
where $P_{\mathrm{DAC}}, P_{\mathrm{mix}}, P_{\mathrm{filt}}$, and $P_{\mathrm{syn}}$ represent the power consumed by the digital-to-analog (DAC) converter, mixer, filter, and frequency synthesizer, respectively. Herein, same with the aerial-active-RIS, $P_{\mathrm{AP}}$, $P_{\mathrm{BS}}$ and $\{P_{\mathrm{UAV-BS},m}\}_{m\in\mathbb{M}}$ are treated as constant offsets. } According to practical values reported in~\cite{cute}, the combined term $P_{\mathrm{DAC}} + P_{\mathrm{mix}} + P_{\mathrm{filt}}$ reaches approximately 18.2~dBm, which is over 20~dB higher than the corresponding $P_{\mathrm{E}}$ used in Table~\ref{SimPar}. Furthermore, the additional contribution from $P_{\mathrm{syn}}$ further exacerbates the total power requirement. Although the aerial-AF-relay operates in full-duplex mode, it incurs additional hardware complexity to suppress self-interference~\cite{aris1, RISRe}. As a result, the aerial-AF-relay suffers from significantly higher power demands, particularly on the relay side, making it substantially less energy-efficient than the proposed aerial-active-RIS architecture. 

{For numerical comparison, Fig.~\ref{fig_af} plots the resulting total power versus the distance of $\pmb{\rho}_{\mathcal{G}}$ from the source $\left(d_{\mathcal{G}}\right)$ for aerial-active-RIS vs. $N=M$-element aerial-AF-relay~\cite{af1, af2, af3}\footnote{{Deploying an AF-relay with an antenna array whose size scales with $N$ would imply hundreds of antenna elements, leading to prohibitive hardware cost and power consumption~\cite{cute}. Consequently, the majority of the literature considers AF-relays whose antenna dimensions follow the scale of the TRx~\cite{af1, af2, af3}.}} with optimal configuration employed by numerical exhaustive search. The figure shows that the aerial-AF-relay consistently requires substantially higher power, even with less number of elements ($M<N$ in general), than the aerial-active-RIS under identical throughput conditions: gain produced by 25.48 and 27.19~dB in $d_{\mathcal{G}}=1000$ and $1200$~m, respectively. Note that beyond a certain distance in the figure, the AF-relay curve dramatically rises as the relay transmit power required to compensate for the multiplicative path loss becomes dominant~\cite{RISEE}. Furthermore, the pronounced gap from aerial-active-RIS and aerial-AF-relay stems from the fundamental architectural differences between the two systems: the aerial-AF-relay incurs an additional relay transmission stage, where the aerial-AF-relay is located at an intermediate position between the source and the receivers, to compensate for multiplicative path loss loss and amplified noise, together with a significantly larger RF-chain circuit power due to high-power components mentioned in~\eqref{afpower}, whereas the aerial-active-RIS benefits from cascaded source-RIS-destination reflection with amplification gain and relies only on low-power element-level amplification and control circuitry represented by $P_{\mathrm{E}}$.}
%Moreover, by direct computing the RHS of the second constraint of~(\ref{objectACT}) for given parameters in Table~\ref{SimPar}, we can get
%\begin{equation}
%\begin{aligned}
%\label{feas}
%\sum_{m\in\mathbb{M}}P_m&\le{\alpha^{-2} G_{\mathrm{s}}^{-1}}N^{-2}\beta_s^{-1}\left(P_{\max,\mathrm{a}}-NP_{\mathrm{E}}-\alpha^2  N \sigma_{\mathrm{a}}^2 \right)\\&\in(74~(\alpha^2=12~\textrm{dB}), 77~(\alpha^2=15~\textrm{dB}))~\textrm{dBm},\\% 36~39\\
%\sum_{m\in\mathbb{M}}P_m&\le G_s^{-1}P_{\max}=52~\textrm{dBm}
%\end{aligned}
%\end{equation}
%which, combining with the transmit power value of our proposed algorithm in simulation results, implies the almost-sure feasibility that we have mentioned in Remark~\ref{ri}. This is due to the inverse of $\beta_s^{-1}$, which scales up the RHS of~(\ref{feas}).
%\begin{figure}[t]
%	\begin{center}
%		\includegraphics[width=0.7\columnwidth,keepaspectratio]%
%		{fig/HACT333}
%		\caption{Total power with respect to the height of aerial-active/passive RIS.}
%		\label{figheight}
%	\end{center}
%\end{figure} 
\subsection{Reliability of Placement and Amplification in Aerial-Active-RIS}
Fig.~\ref{fig_cubic} illustrates the variation of $\kappa_m$ with respect to the aerial-active-RIS height $(H)$ and the 2D distance of the UAV-BS from the source $(||\mathbf{w}_m||_2)$. It is observed that $\kappa_m$ is on the order of $10^{-2}$ when $||\mathbf{w}_m||_2$ is sufficiently large, implying that $\mathbf{q}_m^*=\kappa_m \mathbf{w}_m$ lies extremely close to the origin relative to $\mathbf{w}_m$. Moreover, $\kappa_m$ increases as $H$ grows. This occurs because a larger $H$ yields a greater $a$ and a smaller $\sqrt{-\frac{a}{3}}$ (with $a<0$) in~(\ref{xixixixi}), while the posterior term of $\sqrt{-\frac{a}{3}}$ in~(\ref{xixixixi}), approximated by $(-1+\frac{\epsilon}{3\sqrt{3}})$ for $|\epsilon|\ll 1$ in~(\ref{taylor}), remains negative. Consequently, increasing $H$ reduces $\sqrt{-\frac{a}{3}}$ and thereby enlarges $\kappa_m$. A similar analogy holds for decreasing $||\mathbf{w}_m||_2$, which likewise results in an increase of $\kappa_m$.

Fig.~\ref{fig_alpha}(a) and (b) show the behavior of $\alpha^{*2}$ according to $H$ and $||\mathbf{w}_m||_2$, and the optimal total power ($\mathrm{obj}$) with respect to $\alpha^2$ with minimum at $\alpha^{*2}$, respectively. As shown in (a), as $H$ and $||\mathbf{w}_m||_2$ increase, $\alpha^{*2}$ exhibits a rising trend, which is clear since increase in $H$ and $||\mathbf{w}_m||_2$ leads to farther link distance, which leads to stronger reflection compared to low $H$ and $||\mathbf{w}_m||_2$. This also can be found in~(\ref{optalpha}), where increase of $H$ and $||\mathbf{w}_m||_2$ leads to the increase of $\sum_{m\in\mathbb{M}}\Omega_0\left|\left|\pmb{\rho}_{\mathrm{RIS}}\right|\right|_2^2 \left|\left|\pmb{\rho}_{\mathrm{RIS}}-\pmb{\rho}_m \right|\right|_2^2 $. Moreover, we, in (b), can check the optimality of $\alpha^{*2}$, where the global minimum can be achieved in $\alpha^{*2}$ with minimum total power given in $\mathrm{obj}$ and~(\ref{alphaobj}). Consequently, if $\alpha^{*2}$ exceeds the threshold of 40~dB, we select $\alpha^*$ by $\alpha^{*2}=40$~dB as the optimum. From this point onward, we set $\alpha^2=\left(\alpha^{*2}10^{\frac{\Delta_0}{10}}\right)$~dB as a reference for comparing the non-optimality with respect to $\alpha^2 = \alpha^{*2}$.
\subsection{Comparison of Total Power under Various Conditions}
Fig.~\ref{actCenter}(a) illustrates the total power according to the distance of $\pmb{\rho}_{\mathcal{G}}$ from the source $\left(d_{\mathcal{G}}\right)$ with the implementation of active-RIS. Here, the performance gain by our proposed algorithm with $\alpha^2=\alpha^{*2}$ is approximately given by 32.20 and 30.17~dB for the distance 800 and 1200~m, respectively, which stably ensures almost 30~dB performance compared to total power of aerial-passive-RIS. It is also clear that $d_{\mathcal{G}}$ leads to the increase of total power, owing to the increase of path loss and probability to adopt the full-array scenario~\cite{HBRIS}. Moreover, as $d_{\mathcal{G}}$ increases, the power growth rate of the aerial-active-RIS becomes greater than that of the aerial-passive-RIS, which is shown by decrease in performance gain. It is because at longer distances, $\alpha^*$ becomes stronger, which also leads to an increase in both amplification power and dynamic noise which leads to the increase of total power.

%As shown in the result, the performance gain decreases by the increase of $d_{\mathcal{G}}$. It is because when $d_{\mathcal{G}}$ increases, $||\pmb{\rho}_{\mathrm{RIS}}-\pmb{\rho}_m||_2$ also increases, and from~(\ref{modfirst}), the term $\sigma^2+\alpha^2 \sigma_{\mathrm{a}}^2 N \frac{\beta_0}{\left|\left|\pmb{\rho}_{\mathrm{RIS}}-\pmb{\rho}_m\right|\right|_2^2}$, which is induced by dynamic noise by active-RIS, goes to $\sigma^2$ and re performance gain $\delta$ can be formulated by
%\begin{equation}
%\begin{aligned}
%\label{gainp}
%\delta&\triangleq\frac{\mathrm{obj}}{\sum_{m\in\mathbb{M}}P_{m,\mathrm{p}}}\\
%&=\left(1+\alpha^2 N^2\beta_{\mathrm{s}} G_{\mathrm{s}}\right)\frac{\sum\limits_{m\in\mathbb{M}} P_{m}}{\sum\limits_{m\in\mathbb{M}} P_{m,\mathrm{p}}}\\
%&+\left(\alpha^2 N\sigma_{\mathrm{a}}^2+ NP_{\mathrm{E}}\right)\left(\sum\limits_{m\in\mathbb{M}} P_{m,\mathrm{p}}\right)^{-1},
%\end{aligned}
%\end{equation}
%which decreases when $\sum\limits_{m\in\mathbb{M}} P_{m}$ increases, which is given in the plot of the total power of aerial-passive-RIS in Fig.~\ref{actCenter}.
%\begin{figure}[t]
%	\begin{center}
%		\includegraphics[width=0.7\columnwidth,keepaspectratio]%
%		{fig/NumACT33}
%		\caption{Total power according to the number of active/passive-RIS elements.}
%		\label{actNum}
%	\end{center}
%\end{figure}%
We can also notice that the difference in total power between $\alpha^2=\alpha^{*2}$ and $(\alpha^{*2}10^{\frac{\Delta_0}{10}})$~dB with $\Delta_0=-5$ decreases from 3.79 to 1.52~dB for the distance of 800 to 1200~m, respectively, which implies the dependence on $\alpha$ near $\alpha^*$ decreases when $d_{\mathcal{G}}$ increases. We can analyze this phenomenon by taking the natural log of~(\ref{alphaobj}) by $p(\alpha)$ (i.e., $p(\alpha)=\ln(\mathrm{obj})$), and considering
\begin{equation}
\label{pert}
p(\alpha^*(1+\epsilon))-p(\alpha^*),\footnote{The power is measured in dBm, and we consider the variation of $\alpha$ in dB.}
\end{equation}
for small $\epsilon$.\footnote{Although $\epsilon$ is small, the product $\alpha$ can be significant due to the scale of $\alpha$, as shown in Fig.~\ref{fig_alpha}. Therefore, approximations such as using Taylor's first order expansion to estimate the difference are not reliable in this context. Instead, the total power difference in dBm should be computed directly.} It is evident that a smaller value of~(\ref{pert}) indicates a weaker sensitivity to $\alpha$ in the vicinity of $\alpha^*$, in terms of dB.
%It is because whens the distance from the source increases, $||\pmb{\rho}_{\mathrm{RIS}}-\pmb{\rho}_m||_2$ also increases, and the dynamic noise component $\alpha^2 \sigma_{\mathrm{a}}^2 N \beta(\mathbf{q}, \pmb{\rho}_m)(\propto||\pmb{\rho}_{\mathrm{RIS}}-\pmb{\rho}_m||_2^{-2})$ in~(\ref{modfirst}) asymptotically diminishes, resulting in reduced variation regardless of the chosen value of $\alpha$.XXXXXXX(why)
%\begin{figure*}
%\begin{equation}
%\begin{aligned}
%\label{alphaproof}
%&p'(\alpha)=\frac{2\left( N\sigma_{\mathrm{a}}^2+\sum_{m\in\mathbb{M}}\Omega_0{\Omega_1}{\Omega_2}\right)\alpha-2\left(\sum_{m\in\mathbb{M}}\Omega_0\left|\left|\pmb{\rho}_{\mathrm{RIS}}\right|\right|_2^2 \left|\left|\pmb{\rho}_{\mathrm{RIS}}-\pmb{\rho}_m \right|\right|_2^2 \right)\alpha^{-3}}{\mathrm{obj}}
%%&p''(\alpha)=\frac{(2\left( N\sigma_{\mathrm{a}}^2+\sum_{m\in\mathbb{M}}\Omega_0{\Omega_1}{\Omega_2}\right)+6\left(\sum_{m\in\mathbb{M}}\Omega_0\left|\left|\pmb{\rho}_{\mathrm{RIS}}\right|\right|_2^2 \left|\left|\pmb{\rho}_{\mathrm{RIS}}-\pmb{\rho}_m \right|\right|_2^2 \right)\alpha^{-4})p(\alpha)-(2\left( N\sigma_{\mathrm{a}}^2+\sum_{m\in\mathbb{M}}\Omega_0{\Omega_1}{\Omega_2}\right)\alpha-2\left(\sum_{m\in\mathbb{M}}\Omega_0\left|\left|\pmb{\rho}_{\mathrm{RIS}}\right|\right|_2^2 \left|\left|\pmb{\rho}_{\mathrm{RIS}}-\pmb{\rho}_m \right|\right|_2^2 \right)\alpha^{-3})p'(\alpha)}{(p(\alpha))^2}.
%\end{aligned}
%\end{equation}
%\end{figure*}
\begin{figure}[t]
	\begin{center}
		\includegraphics[width=0.85\columnwidth,keepaspectratio]%
		{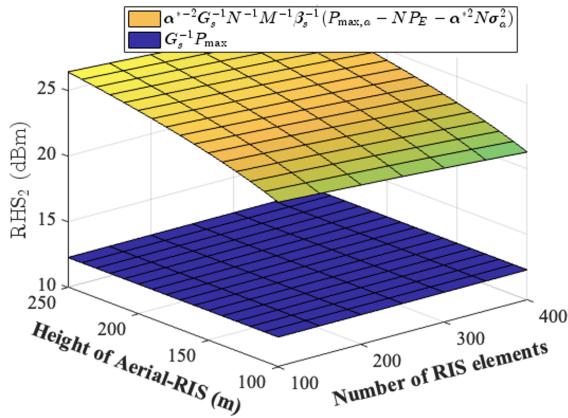}
		\caption{{Behavior of $\mathrm{RHS_2}$ in (\ref{objectACT}) with respect to the height of aerial-RIS and the number of active-RIS elements.}}
		\label{fig_feas}
	\end{center}
\end{figure}

Hence, by directly computing~(\ref{pert}) we can get~(\ref{mac4}). From~(\ref{mac4}), it is clear that as $d_{\mathcal{G}}$ increases, $||\pmb{\rho}_{\mathrm{RIS}} - \pmb{\rho}_m||_2^2$ also increases, causing the RHS of~(\ref{mac4}) to approach 0. Conversely, when $d_{\mathcal{G}}$ decreases, the argument inside the logarithm approaches $\frac{1}{2}\left(\left(1+\epsilon\right)^2 + \left(1+\epsilon\right)^{-2}\right)>1$. Therefore, we can conclude that an increase in $d_{\mathcal{G}}$ implies a reduced dependency on $\alpha$ in dB for the total power in dBm.
\begin{figure*}
\begin{equation}
\begin{aligned}
\label{mac4}
&p(\alpha^*(1+\epsilon))-p(\alpha^*)\\
&=\ln\left(\frac{\splitfrac{\left((1+\epsilon)^2+(1+\epsilon)^{-2}\right)\sqrt{\left( N\sigma_{\mathrm{a}}^2+\sum_{m\in\mathbb{M}}\Omega_0{\Omega_1}{\Omega_2}\right)\left(\sum_{m\in\mathbb{M}}\Omega_0\left|\left|\pmb{\rho}_{\mathrm{RIS}}\right|\right|_2^2 \left|\left|\pmb{\rho}_{\mathrm{RIS}}-\pmb{\rho}_m \right|\right|_2^2 \right)}\mathstrut}{\splitfrac{+\sum_{m\in\mathbb{M}}\left(\Omega_0\left|\left|\pmb{\rho}_{\mathrm{RIS}}\right|\right|_2^2{\Omega_2}+\Omega_0\left|\left|\pmb{\rho}_{\mathrm{RIS}}-\pmb{\rho}_m\right|\right|_2^2{\Omega_1}\right)+NP_{\mathrm{E}}}\mathstrut}}{\splitfrac{2\sqrt{\left( N\sigma_{\mathrm{a}}^2+\sum_{m\in\mathbb{M}}\Omega_0{\Omega_1}{\Omega_2}\right)\left(\sum_{m\in\mathbb{M}}\Omega_0\left|\left|\pmb{\rho}_{\mathrm{RIS}}\right|\right|_2^2 \left|\left|\pmb{\rho}_{\mathrm{RIS}}-\pmb{\rho}_m \right|\right|_2^2 \right)}\mathstrut}{\splitfrac{+\sum_{m\in\mathbb{M}}\left(\Omega_0\left|\left|\pmb{\rho}_{\mathrm{RIS}}\right|\right|_2^2{\Omega_2}+\Omega_0\left|\left|\pmb{\rho}_{\mathrm{RIS}}-\pmb{\rho}_m\right|\right|_2^2{\Omega_1}\right)+NP_{\mathrm{E}}}\mathstrut}}\right)
\end{aligned}
\end{equation}
\hrule
\end{figure*}
%More specifically, in the total power in $\mathrm{obj}$, the term $\alpha^2 N\sigma_a^2 +NP_{\mathrm{E}}$ is not affected by the distance from the source ($||\pmb{\rho}_{\mathrm{RIS}}-\pmb{\rho}_m||_2$). Hence, the term $(1+\alpha^2N^2\beta_s G_s)\sum_{m\in\mathbb{M}}P_m$, which is specified in~(\ref{modfirst}), plays a major role, wherein~(\ref{modfirst}) the term
%\begin{equation}
%\label{figalpha}
%\sigma^2+\alpha^2 \sigma_{\mathrm{a}}^2 N \frac{\beta_0}{\left|\left|\pmb{\rho}_{\mathrm{RIS}}-\pmb{\rho}_m\right|\right|_2^2}
%\end{equation} 
%is the only term that is affected by $||\pmb{\rho}_{\mathrm{RIS}}-\pmb{\rho}_m||_2$, which diminishes

%Furthermore, the result also corresponds to our claim in~(\ref{asymp}) that the effect of the active noise gets weaker and the performance gain asymptotically approaches to $\alpha_{\max}^2 =12,~15$~dB when the RIS-to-UAV-BS distance increases.

Fig.~\ref{actCenter}(b) illustrates the total power according to feasible $H$ with the guarantee of high LoS probability~\cite{a2gglobecom, Noh}. Clearly, by applying our algorithm with $\alpha^2 = \alpha^{*2}$, for increase of $H=160$ to $190$~m, we can reduce the total power by approximately 31.87 and 31.28~dB, respectively, compared to the passive-RIS-equipped benchmark algorithms. Similar to the scenario illustrated in Fig.~\ref{actCenter}, an increase in $H$ results in a greater distance between the UAV-BS and the aerial-RIS, which leads to, similar to Fig.~\ref{actCenter}, the decrease of performance gain. Furthermore, by applying the same reasoning to the RHS of~(\ref{mac4}), and noting that $||\pmb{\rho}_{\mathrm{RIS}}||_2^2, ||\pmb{\rho}_{\mathrm{RIS}} - \pmb{\rho}_m||_2^2 \sim \mathcal{O}(H^2)$, it follows that both the numerator and denominator of the RHS of~(\ref{mac4}) scale as $\mathcal{O}(H^2)$, which implies that the RHS of~(\ref{mac4}) scales as $\mathcal{O}(1)$ with respect to $H$. This indicates that the variation in total power, measured in dBm, with respect to $\alpha$ in dB remains nearly constant with increasing $H$, which is also shown in Fig.~\ref{actCenter}(b).

Fig.~\ref{actCenter}(c) shows the total power with respect to the number of active-RIS elements ($N$). {Note that in particular, the $N$-dependent circuit-related term appears as $(\alpha^2\sigma_{\mathrm{a}}^2+P_{\mathrm{E}})N$, whereas the remaining dominant terms in $\mathrm{obj}$ scale as $\mathcal{O}(1)$ with respect to $N$~\cite{HBRIS}. As a result, we can observe that the power increases in large scale of $N$. Nevertheless, the RIS is modeled as a ULA, to avoid illustrating unrealistic regimes with excessively large arrays, we therefore restrict $N$ to a feasible range, which is the left-side of Fig.~\ref{actCenter}(c).} As shown in the figure, although considering the reflection power $P_{\mathrm{R}}$ and power consumption by active hardware $NP_{\mathrm{E}}$, the proposed method with aerial-active-RIS greatly reduces the power, where the performance gain by our proposed algorithm with $\alpha^2=\alpha^{*2}$ is approximately given by 36.11 and 30.98~dB for $N=120$ and 300, respectively. %From the results in Fig.~\ref{actCenter},~\ref{figheight} and~\ref{actNum}, we can confirm the optimality of $\alpha^*$ we have derived in~(\ref{optalpha}), which balances the amplification of the desired signal and dynamic noise induced by active components.
In addition, we can notice that the difference of total power in dB between passive and active-RIS scenario is in first (small $N$) large enough, and gradually gets smaller in given range of the number of RIS elements. %(corresponds to $N=120\rightarrow300$ in Fig.~\ref{actNum}).
It is because when $N$ is small, $\mathrm{obj}$ is asymptotically $\mathrm{obj}\sim\sum_{m\in\mathbb{M}} P_m$, and by applying $N\rightarrow0$ to~(\ref{Constraint}), $P_m$ goes to $\alpha^{-2}P_{m,\mathrm{p}}$~\cite{HBRIS}. By combining those aspects, it is clear that the total power gain between aerial-active- and passive-RIS goes to $\alpha^2$, and by increasing $N$, the impact of the dynamic noise $\alpha^2 N\sigma_{\mathrm{a}}^2 $, the active-RIS-reflection power $\alpha^2 NM \beta_{\mathrm{s}} G_{\mathrm{s}}\sum_{m\in\mathbb{M}}P_m$, and the active-circuit-dissipated power $NP_{\mathrm{E}}$ is added to $\sum_{m\in\mathbb{M}}P_m$ and formulate $\mathrm{obj}$, affects to the increase of total power in combined way and the gain is becoming less than $\alpha^2$. 
%Moreover, if we look at~(\ref{modfirst}), we can divide it into two scenarios:
%\begin{enumerate}
%\item Full-array scenario ($(\bar{N}=N)$): For the full-array case, it is clear that $g\sim N^2$~\cite{HBRIS}, and hence~(\ref{modfirst}) increases with rate of $\mathcal{O}(N)$, and summing up by $m$ and adding $\alpha^2 N\sigma_{\mathrm{a}}^2+ NP_{\mathrm{E}}$, we can conclude that $\mathrm{obj}\sim\mathcal{O}(N)$.
%\item Sub-array scenario $(\bar{N}<N$): We have derived in~\cite{HBRIS} that $0\le N_i\le kN~(i=1, ..., L)$, where $0<k<1$, and $g^{-1}\sim N_i^{-2}\ge(kN)^{-2}~(i=1, ..., L)$ for each $m\in\mathbb{M}_i~(i=1, ..., L)$. Hence, by applying to~(\ref{modfirst}), summing by $m$ and adding $\alpha^2 N\sigma_{\mathrm{a}}^2+ NP_{\mathrm{E}}$, we can conclude that $\mathrm{obj}\ge\mathcal{O}(N)$.
%\end{enumerate}
Furthermore, both the numerator and the denominator of RHS of~(\ref{mac4}) scales as $\mathcal{O}\left(\sqrt{(N+N^2 g^{-1})g^{-1}}+Ng^{-1}+N\right),$ which also implies that~(\ref{mac4}) scales with $\mathcal{O}(1)$ with respect to $N$. This suggests that the total power perturbation from $\alpha^*$ to $\alpha^* + d$ in dBm stays constant with $N$, as illustrated in Fig.~\ref{actCenter}(c).

Fig.~\ref{fig_feas} shows the behavior of  $\mathrm{RHS}_2$ in~(\ref{objectACT}), 
%\begin{equation}
%\label{const22}
%\begin{cases}
%1.~G_{\mathrm{s}}^{-1}P_{\max}~(\mathrm{Source~power~constraint})\\2.~\alpha^{*-2} G_{\mathrm{s}}^{-1}N^{-2}\beta_{\mathrm{s}}^{-1}\left(P_{\max,\mathrm{a}}-NP_{\mathrm{E}}-\alpha^{*2}  N \sigma_{\mathrm{a}}^{2}\right)\\(\mathrm{Active-RIS-reflection~power~constraint}), 
%\end{cases}
%\end{equation}
which is the upper-bound of source transmit power $\sum_{m\in\mathbb{M}}P_m$. From the given figure, the expression
\begin{equation}
\label{expp}
\alpha^{*-2} G_{\mathrm{s}}^{-1}N^{-1}M^{-1}\beta_{\mathrm{s}}^{-1}\left(P_{\max,\mathrm{a}}-NP_{\mathrm{E}}-\alpha^{*2}  N \sigma_{\mathrm{a}}^{2}\right)
\end{equation}
in $\mathrm{RHS}_2$ decreases as $H$ and $N$ decrease, where the reduction is primarily due to the term $N^{-1}\beta_{\mathrm{s}}^{-1}$. Meanwhile, the term $G_{\mathrm{s}}^{-1}P_{\max}$ remains nearly constant, as it is almost a constant under the assumption that the antenna beamforming at the ground backhaul source is almost perfectly aligned with the aerial-active-RIS~\cite{beamJSAC, MS, beamgc}. {Moreover, the upper-bound is observed to be in range of 12$\sim$15~dBm.} Meanwhile, the total power in the simulations ($\mathrm{obj}$), presented in Fig.~\ref{actCenter}(a)$\sim$(c), almost surely remains below 10~dBm, which is strictly larger than the source transmit power:
\begin{equation}
\begin{aligned}
\label{strict}
\mathrm{obj}&\triangleq\left(1+\alpha^2 NM\beta_{\mathrm{s}} G_{\mathrm{s}}\right)\sum\limits_{m\in\mathbb{M}} P_{m}+\alpha^2 N\sigma_{\mathrm{a}}^2+ NP_{\mathrm{E}}\\
&>\sum_{m\in\mathbb{M}}P_m.
\end{aligned}
\end{equation}
Hence, we can observe that it leads to the fact that $\sum_{m\in\mathbb{M}}P_m$ also stays below 10~dBm, which does not exceeds the $\mathrm{RHS}_2$ that are computed in Fig.~\ref{fig_feas}. {This equivalently indicates that the proposed system remains within the feasible region defined by $\mathrm{RHS}_2$ in~(\ref{objectACT}).}%In addition, the observed difference of magnitude in overall~(\ref{expp}) compared to $G_{\mathrm{s}}^{-1}P_{\max}$ can be attributed to the inverse of $\beta_{\mathrm{s}}$, which scales up the total magnitude of~(\ref{expp}).

\section{Conclusion}
In this paper, we proposed a novel aerial-active-RIS-assisted backhaul architecture to enable energy-efficient full 3D coverage for UAV-BS backhaul networks in 6G. We derived the minimum total power required for backhauling UAV-BSs under target data rate constraints and showed that equal amplification gain is an effective strategy for maximizing energy-efficiency. By employing a practical aerial-active-RIS signal model and accounting for active-RIS-induced dynamic noise, we optimized the placement, array configuration, amplification gain, and phase of the aerial-active-RIS. Simulation results validated the effectiveness of the proposed method, demonstrating significant energy-efficiency improvements over benchmarks and highlighting its strong potential for delivering reliable and scalable 3D backhaul coverage in 6G.

%In this paper, we proposed a novel aerial-active-RIS-assisted backhaul architecture that enables energy-efficient full 3D coverage for UAV-BS backhaul networks in 6G. We derived the minimum total power required for backhauling UAV-BSs while ensuring the target data rate and demonstrated that equal amplification gain is an effective approach for energy-efficiency maximization. By incorporating a practical aerial-active-RIS signal model and balancing the beamforming gain with the nonlinear combination of squared-distances among multiple UAV-BSs, while considering active-RIS-induced dynamic noise, we determined the optimal placement, array configuration, amplification gain, and phase of the aerial-active-RIS, along with the corresponding phase alignment point. Simulation results validated the effectiveness of the proposed method, where the proposed approach significantly outperforms benchmarks in terms of energy-efficiency, highlighting its superior capability to deliver reliable, large-scale 3D coverage for resilient and scalable backhaul for 6G.
%We believe that our work broadened the RIS-aided architecture by adopting active amplification to aerial-RIS, %and for future work, we will work on comparative research for our proposed aerial-active/passive-RIS setup algorithm and other relaying methods, including amplify-and-forward (AF)~\cite{RISEE, Doh} and decode-and-forward (DF) architectures~\cite{RISEE, DF, vsrelay}.

	\section*{Appendix A}
\section*{Proof of Theorem~\ref{cubiceq}}
%	Let the orthogonal projection of $\mathbf{q}_m$ onto the line segment that links the source in the origin and $\mathbf{w}_m$ be $\mathbf{\bar{q}}_m$. Then, by the definition of the orthogonal projection, the following holds:
%	\begin{equation}
%	\begin{split}
%	\label{projq}
%	\begin{cases}
%	\left|\left|\mathbf{q}_m - \mathbf{w}_m\right|\right|_2^2 = \left|\left|\mathbf{q}_m - \mathbf{\bar{q}}_m\right|\right|_2^2 + \left|\left|\mathbf{\bar{q}}_m - \mathbf{w}_m\right|\right|_2^2,\\
%	\left|\left|\mathbf{q}_m \right|\right|_2^2 = \left|\left|\mathbf{q}_m - \mathbf{\bar{q}}_m\right|\right|_2^2 +\left|\left|\mathbf{\bar{q}}_m \right|\right|_2^2.
%	\end{cases}
%	\end{split}
%	\end{equation}	
%	Here, to minimize the objective function of~(\ref{num}), we need to minimize the left-hand side of both equations in~(\ref{projq}). Hence, we have to choose $\mathbf{q}_m$ that satisfies the following:
%	\begin{equation}
%	\label{projcond}
%	\left|\left|\mathbf{q}_m - \mathbf{\bar{q}}_m\right|\right|_2^2 =0~\leftrightarrow~\mathbf{q}_m=\kappa_m \mathbf{w}_m~\left(\kappa_m>0\right),
%	\end{equation}
%which is clarified in Fig.~\ref{fight}.
Let $\mathbf{\bar{q}}_m$ be the orthogonal projection of $\mathbf{q}_m$ onto the line segment connecting the source and $\mathbf{w}_m$. Thereafter, the following holds:
\begin{equation}
\begin{split}
\label{projq}
\begin{cases}
\left\|\mathbf{q}_m - \mathbf{w}_m\right\|_2^2 = \left\|\mathbf{q}_m - \mathbf{\bar{q}}_m\right\|_2^2 + \left\|\mathbf{\bar{q}}_m - \mathbf{w}_m\right\|_2^2\\
\left\|\mathbf{q}_m \right\|_2^2 = \left\|\mathbf{q}_m - \mathbf{\bar{q}}_m\right\|_2^2 +\left\|\mathbf{\bar{q}}_m \right\|_2^2.
\end{cases}
\end{split}
\end{equation}  
To minimize the objective in~(\ref{num}), it is necessary to minimize the left-hand side in~(\ref{projq}). Consequently, $\mathbf{q}_m$ should satisfy:
\begin{equation}
\label{projcond}
\left\|\mathbf{q}_m - \mathbf{\bar{q}}_m\right\|_2^2 =0  \leftrightarrow  \mathbf{q}_m = \kappa_m \mathbf{w}_m ~(\kappa_m>0),
\end{equation}
			\begin{figure}[t]
	\begin{center}
		\includegraphics[width=0.9\columnwidth,keepaspectratio]%
		{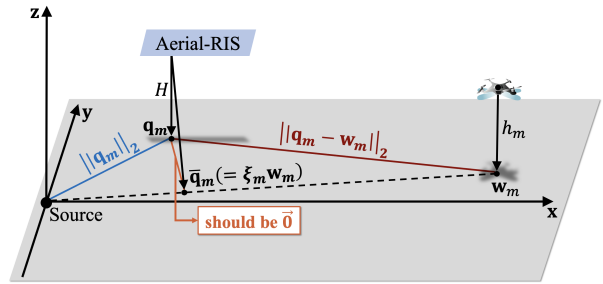}
		\caption{Illustration of~(\ref{projcond}) in Theorem~\ref{cubiceq}: orthogonal projection of $\mathbf{q}_m$.}
		\label{figth2}
	\end{center}
\end{figure}
which is clarified in Fig.~\ref{figth2}. Hence, by substituting~(\ref{projcond}) we can denote the objective of~(\ref{num}) by $g\left(\kappa_m\right)$, that is:
	\begin{equation}
		\label{gkappa}
		g\left(\kappa_m\right)\triangleq \left|\left|\mathbf{w}_m\right|\right|_2^4 \left(\kappa_m^2 + \zeta_1^2 +\bar{\Omega}_1 \right) \left(\left(1-\kappa_m\right)^2 + \zeta_2^2 +\bar{\Omega}_2 \right),
	\end{equation}
where
\begin{equation}
\label{zeraexp}
\zeta_1=\frac{H}{\left|\left|\mathbf{w}_m\right|\right|_2}, \zeta_2 = \frac{\left|H-h_m\right|}{\left|\left|\mathbf{w}_m\right|\right|_2}, \bar{\Omega}_i= ||\mathbf{w}_m||_2^{-2}\tilde{\Omega}_i~(i=1,2).
\end{equation}
To find the minimum of $g$ at $\kappa_m>0$, we have to solve $g^{\prime}(\kappa_m)=0$. The discriminant $\Delta$ of the cubic equation $g'\left(\kappa_m\right)=0$ is given by $\Delta = \left(\frac{a}{3}\right)^3 + \left(\frac{b}{2}\right)^2$~\cite{Lovett}, where $a=\frac{1}{2} \left(\zeta_1^2 + \zeta_2^2 +\bar{\Omega}_1+\bar{\Omega}_2 \right) - \frac{1}{4}, b=\frac{1}{4} \left(\zeta_2^2 - \zeta_1^2 +\bar{\Omega}_2 - \bar{\Omega}_1\right).$ For the range of $a$ and $b$, we can assume the followings:
\begin{enumerate}
\item Since we assume that $d_{\mathcal{G}}$ is sufficiently large, $\left|\left|\mathbf{w}_m\right|\right|_2$ follows a similar scale. In this paper, we will assume the scale with $10^3$~m, which is also reflected in the simulation in Section IV and Table~\ref{SimPar}.
\item $\beta_0\approx-43.3$~dB for sub-6~GHz backhaul (gets larger for higher-frequency applications in 6G~\cite{dsRIS, trimimo,trimimo2}).
\item Feasible $\alpha_{\max}^2$ is given by less than 40~dB~\cite{aris1, aris4}.
\item $G_{\mathrm{s}}\le G_{\max}=8$~dB in Table~\ref{SimPar}.
\item We assume that $M$ and $N$ has a scale of approximately a few or less than ten and few hundred, respectively: $M=16, N\in[100, 400]$~\cite{RISEE, HBRIS, aris1}
\item $\sigma_{\mathrm{a}}^2$ and $\sigma^2$ has the similar scale~\cite{aris4, aris5}.
\end{enumerate}
Therefore, we can assume that
\begin{equation}
\begin{aligned}
\label{approxKK}
&\zeta_i\ll1~(i=1,2)\\
&\bar{\Omega}_1=\alpha^2 NM \beta_0 G_{\mathrm{s}} ||\mathbf{w}||_2^{-2}\ll1,~\bar{\Omega}_2=\alpha^2 \frac{\sigma_{\mathrm{a}}^2}{\sigma^2} N \beta_0 ||\mathbf{w}_m||_2^{-2}\ll1,
\end{aligned}
\end{equation}
which also leads to $a<0\left(\approx-\frac{1}{4}\right)$ and $|b|\ll1$. %It needs more accurate analysis of minimizing (\ref{gkappa}) by solving $g^{\prime}(\kappa)=0$ due to the addition of $K$, which is going to be further discussed on our extended work.
Hence, we can deduce $\Delta<0$, which leads to three real solutions $\left\{\kappa_{m,k}\right\}_{k=0}^2$ of $g'\left(\kappa_m\right)=0$~\cite{Lovett}:
	\begin{equation}
		\begin{split}
		\label{solcubicACT}
		\kappa_{m,k}=\frac{1}{2}+2\sqrt{-\frac{a}{3}}\cos \left( \frac{1}{3} \cos^{-1} \left( \frac{3b}{2a} \sqrt {-\frac{3}{a}} \right)-\frac{2}{3}\pi k \right)&\\
		\left(k=0,1,2\right)&.
	\end{split}
	\end{equation}
Since $a \approx -\tfrac{1}{4}$ and $|b| \ll 1$, we define $\tfrac{3b}{2a}\sqrt{-\tfrac{3}{a}} = \epsilon$ with $|\epsilon| \ll 1$. By successively applying the first-order Taylor approximation to~(\ref{solcubicACT}), the expression reduces to
	\begin{equation}
	\begin{split}
	\label{taylor}
	\begin{cases}
	\kappa_{m,0} \approx \frac{1}{2}+\sqrt{-a} \left(1+ \frac{\epsilon}{3\sqrt{3}} \right)\\
	\kappa_{m,1} \approx \frac{1}{2}+\sqrt{-a} \left(- \frac{\epsilon}{3\sqrt{3}} \right)\\
	\kappa_{m,2} \approx \frac{1}{2}+\sqrt{-a} \left(-1+ \frac{\epsilon}{3\sqrt{3}} \right).
	\end{cases}
	\end{split}
	\end{equation}
	By substituting $a=\frac{1}{2} \left(\zeta_1^2 + \zeta_2^2 +\bar{\Omega}_1+\bar{\Omega}_2 \right) - \frac{1}{4}$,~(\ref{taylor}) becomes
		\begin{equation}
	\begin{split}
	\label{taylor2}
	\begin{cases}
	\kappa_{m,0} \approx \frac{1}{2}-\sqrt{\frac{1}{4}-\frac{1}{2} \left(\zeta_1^2 + \zeta_2^2 +\bar{\Omega}_1+\bar{\Omega}_2\right)} \left(-1- \frac{\epsilon}{3\sqrt{3}} \right)\\
	\kappa_{m,1} \approx \frac{1}{2}-\sqrt{\frac{1}{4}-\frac{1}{2} \left(\zeta_1^2 + \zeta_2^2+\bar{\Omega}_1+\bar{\Omega}_2 \right)} \left( \frac{\epsilon}{3\sqrt{3}} \right)\\
	\kappa_{m,2} \approx \frac{1}{2}-\sqrt{\frac{1}{4}-\frac{1}{2} \left(\zeta_1^2 + \zeta_2^2 +\bar{\Omega}_1+\bar{\Omega}_2\right)} \left(1- \frac{\epsilon}{3\sqrt{3}} \right).\\
	\end{cases}
	\end{split}
	\end{equation}	
	From~(\ref{taylor2}), we can derive the following outcomes:
	 \begin{enumerate}
 \item It is evident that $\kappa_{m,0}>\kappa_{m,1}>\kappa_{m,2}~(\because |\epsilon|\ll1)$.
 \item By the properties of the quartic equation~\cite{Lovett}, $g$ has two local minimum: $\kappa_{m,0}$ and $\kappa_{m,2}$, with one of them being the global minimum.
 \item Using $|\epsilon|\ll1$, we can deduce the following for $\kappa_{m,2}$:
 \begin{equation}
 \begin{aligned}
 \label{taylor3}
 \kappa_{m,2}&\approx \frac{1}{2}-\sqrt{\frac{1}{4} -\frac{1}{2} \left(\zeta_1^2 + \zeta_2^2 +\bar{\Omega}_1+\bar{\Omega}_2 \right)} \left(1\right)\\
 &=\frac{\frac{1}{2} \left(\zeta_1^2 + \zeta_2^2 +\bar{\Omega}_1+\bar{\Omega}_2 \right)}{\sqrt{\frac{1}{4} + \frac{1}{2} \left(\zeta_1^2 + \zeta_2^2 +\bar{\Omega}_1+\bar{\Omega}_2 \right)}}>0.
 \end{aligned}
 \end{equation}
%which implies that $\kappa_{m,2}>0$.
Furthermore, as $\zeta_1$ and $\zeta_2$ are sufficiently small (owing to the sufficiently large $||\mathbf{w}_m||_2$), it follows from~(\ref{taylor3}) that $\kappa_{m,2}$ should be close to the origin. 
 \end{enumerate}
Hence, we should select $\kappa_m$ as
	\begin{equation}
	\label{cubicres}
	\begin{aligned}
	\kappa_m&\triangleq\kappa_{m,2}\\
	&=\frac{1}{2}+2\sqrt{-\frac{a}{3}}\cos \left( \frac{1}{3} \cos^{-1} \left( \frac{3b}{2a} \sqrt {-\frac{3}{a}} \right)-\frac{4}{3}\pi  \right).
	\end{aligned}
	\end{equation}
	By determining $\mathbf{q}_m^*=\kappa_m\mathbf{w}_m$ accordingly, the theorem follows.~$\blacksquare$
		
		%It is not necessary to upload the biography when you submit your manuscript.
	\section*{Appendix B}
\section*{Proof of Theorem~\ref{thmalpha}}
For given $\mathbf{q}^*, \{\bar{N}\}, \pmb{\bar\rho}~(\mathrm{or}~\{\pmb{\bar\rho}_i^*\}_{i=1}^L), \pmb\Theta$, $\mathrm{obj}$ in~(\ref{objectACT}) becomes
 \begin{equation}
 \begin{aligned}
 \label{alphaobj}
 &\left(1+\alpha^2 NM\beta_{\mathrm{s}} G_{\mathrm{s}}\right)\sum\limits_{m\in\mathbb{M}} P_{m}+\alpha^2 N\sigma_{\mathrm{a}}^2+ NP_{\mathrm{E}}(\triangleq\mathrm{obj})\\
%&=\alpha^2 N\sigma_{\mathrm{a}}^2\\
%&~~~+\sum_{m\in\mathbb{M}}\Omega_0 \alpha^{-2} (\left|\left|\pmb{\rho}_{\mathrm{RIS}}\right|\right|_2^2+\alpha^2{\Omega_1})(\left|\left|\pmb{\rho}_{\mathrm{RIS}}-\pmb{\rho}_m \right|\right|_2^2 +\alpha^2{\Omega_2})\\
%&~~~+NP_{\mathrm{E}}\\
&=\left( N\sigma_{\mathrm{a}}^2+\sum_{m\in\mathbb{M}}\Omega_0{\Omega_1}{\Omega_2}\right)\alpha^2\\
&~~~+\left(\sum_{m\in\mathbb{M}}\Omega_0\left|\left|\pmb{\rho}_{\mathrm{RIS}}\right|\right|_2^2 \left|\left|\pmb{\rho}_{\mathrm{RIS}}-\pmb{\rho}_m \right|\right|_2^2 \right)\alpha^{-2}\\
&~~~+\sum_{m\in\mathbb{M}}\left(\Omega_0\left|\left|\pmb{\rho}_{\mathrm{RIS}}\right|\right|_2^2{\Omega_2}+\Omega_0\left|\left|\pmb{\rho}_{\mathrm{RIS}}-\pmb{\rho}_m\right|\right|_2^2{\Omega_1}\right)+NP_{\mathrm{E}},
 \end{aligned}
 \end{equation}
Hence, we can derive the optimal $\alpha$ by applying the Arithmetic-Geometric Mean inequality, which becomes~\eqref{arge}.
\begin{figure*}
\begin{equation}
\begin{aligned}
\label{arge}
&\left(1+\alpha^2 NM\beta_{\mathrm{s}} G_{\mathrm{s}}\right)\sum\limits_{m\in\mathbb{M}} P_{m}+\alpha^2 N\sigma_{\mathrm{a}}^2+ NP_{\mathrm{E}}\\
&\ge 2\sqrt{\left( N\sigma_{\mathrm{a}}^2+\sum_{m\in\mathbb{M}}\Omega_0{\Omega_1}{\Omega_2}\right)\left(\sum_{m\in\mathbb{M}}\Omega_0\left|\left|\pmb{\rho}_{\mathrm{RIS}}\right|\right|_2^2 \left|\left|\pmb{\rho}_{\mathrm{RIS}}-\pmb{\rho}_m \right|\right|_2^2 \right)}+\sum_{m\in\mathbb{M}}\left(\Omega_0\left|\left|\pmb{\rho}_{\mathrm{RIS}}\right|\right|_2^2{\Omega_2}+\Omega_0\left|\left|\pmb{\rho}_{\mathrm{RIS}}-\pmb{\rho}_m\right|\right|_2^2{\Omega_1}\right)+NP_{\mathrm{E}}.
\end{aligned}
\end{equation}
\hrule
\end{figure*}
Therefore, by the equality condition of~(\ref{arge}), 
%where the equality holds when
%\begin{equation}
%\begin{aligned}
%\label{equalarge}
%&\left( N\sigma_{\mathrm{a}}^2+\sum_{m\in\mathbb{M}}\Omega_0{\Omega_1}{\Omega_2}\right)\alpha^2=\left(\sum_{m\in\mathbb{M}}\Omega_0\left|\left|\pmb{\rho}_{\mathrm{RIS}}\right|\right|_2^2 \left|\left|\pmb{\rho}_{\mathrm{RIS}}-\pmb{\rho}_m \right|\right|_2^2 \right)\alpha^{-2}.
%\end{aligned}
%\end{equation}
we can deduce the optimal $\alpha^*$ as
\begin{equation}
\label{optalpha}
\alpha^*=\min\left\{\sqrt[4]{\frac{\sum_{m\in\mathbb{M}}\Omega_0\left|\left|\pmb{\rho}_{\mathrm{RIS}}\right|\right|_2^2 \left|\left|\pmb{\rho}_{\mathrm{RIS}}-\pmb{\rho}_m \right|\right|_2^2 }{N\sigma_{\mathrm{a}}^2+\sum_{m\in\mathbb{M}}\Omega_0{\Omega_1}{\Omega_2}}}, \alpha_{\max}\right\},
\end{equation}
and the theorem follows.~$\blacksquare$	

		\bibliographystyle{IEEEtran}
		\bibliography{IEEEexample}	

\end{document}